\def\imagetop#1{\vtop{\null\hbox{#1}}}
\newtheorem{thm}{Theorem}
\newtheorem{lemma}{Lemma}
\newtheorem{assum}{Assumption}
\newtheorem{prop}{Proposition}
\theoremstyle{definition}
\declaretheorem[style=definition,qed=$\blacksquare$]{definition}
\newtheorem{example}{Example}
\newtheorem{corollary}{Corollary}
\newtheorem*{PPFIFO}{PP/FIFO Rule}
\newcommand{\densc}{\rho^\text{crit}}
\newcommand{\dens}{\rho}
\newcommand{\hdens}{\hat{\dens}}
\newcommand{\flux}{\Phi}
\newcommand{\fluxin}{\flux^\text{in}}
\newcommand{\fluxout}{\flux^\text{out}}
\newcommand{\fluxoutp}{\frac{d}{d \dens_l}\flux^{\text{out}}}
\newcommand{\fluxoutmax}{\overline{\flux}^\text{out}}
\newcommand{\fluxoutm}{\flux^\text{out,$m$}}
\newcommand{\fluxoutmmax}{\bar{\flux}^\text{out,$m$}}
\newcommand{\densjam}{\rho^\text{jam}}
\newcommand{\fluxc}{\flux^\text{crit}}
\newcommand{\network}{\mathcal{G}}
\newcommand{\Verts}{\mathcal{V}}
\newcommand{\Links}{\mathcal{L}}
\newcommand{\Lstart}{\mathcal{R}^\text{start}}
\newcommand{\OLinks}{\mathcal{O}}
\newcommand{\Lin}{\mathcal{L}^\text{in}}
\newcommand{\Lout}{\mathcal{L}^\text{out}}
\newcommand{\inflow}{d}
\newcommand{\flowe}{f^\text{e}}
\newcommand{\floweO}{\flowe_\OLinks}
\newcommand{\flowealt}{\tilde{f}^\text{e}}
\newcommand{\flowin}{f^\text{in}}
\newcommand{\flowout}{f^\text{out}}
\newcommand{\rflowout}{f^\text{out}}
\newcommand{\dense}{\rho^\text{e}}
\newcommand{\densealt}{\tilde{\rho}^\text{e}}
\newcommand{\Ramps}{\mathcal{R}}
\newcommand{\turn}{\beta}
\newcommand{\off}{\gamma}
\newcommand{\head}{{\sigma}}
\newcommand{\tail}{{\tau}}
\newcommand{\Routea}{A}
\newcommand{\Routeb}{B}
\newcommand{\Sink}{\Verts^\text{sink}}
\newcommand{\meter}{m}
\newcommand{\meteralt}{\tilde{m}}
\newcommand{\Domain}{\mathcal{D}}
\newcommand{\hDomain}{\hat{\Domain}}
\newcommand{\hOmega}{\hat{\Omega}}
\def\imagetop#1{\vtop{\null\hbox{#1}}}
\newcommand{\Ball}{\mathcal{B}}
\newcommand{\bJ}{\bar{J}}
\newcommand{\bF}{\bar{F}}
\newcommand{\bQ}{\bar{Q}}
\newcommand{\bV}{\bar{V}}
\title{A Compartmental Model for Traffic Networks and its Dynamical Behavior}
\date{}
\author{Samuel Coogan and Murat Arcak
\thanks{This work was supported in part by the National Science Foundation under grant ECCS-1101876 and by the Air Force Office of Scientific Research under grant FA9550-11-1-0244.}
\thanks{S. Coogan and M. Arcak are with the Department of Electrical Engineering and Computer Sciences, University of California, Berkeley, Berkeley, CA. \texttt{\{scoogan,arcak\}@eecs.berkeley.edu}}}
\renewcommand\footnotemark{}
\begin{document}
\maketitle

\tikzstyle{link}=[line width=2pt, ->,>=latex]
\tikzstyle{junc}=[draw,circle,inner sep=1pt,minimum width=8pt]
\tikzstyle{onramp}=[line width=2pt, dashed,->,>=latex]
\begin{abstract}
We propose a macroscopic traffic network flow model suitable for analysis as a dynamical system, and we qualitatively analyze equilibrium flows as well as convergence. Flows at a junction are determined by downstream \emph{supply} of capacity as well as upstream \emph{demand} of traffic wishing to flow through the junction.  This approach is rooted in the celebrated Cell Transmission Model for freeway traffic flow.  Unlike related results which rely on certain system cooperativity properties, our model generally does not possess these properties.  We show that the lack of cooperativity is in fact a useful feature that allows traffic control methods, such as ramp metering, to be effective. Finally, we leverage the results of the paper to develop a linear program for optimal ramp metering.

  \end{abstract}

\section{Introduction}

Despite the economic importance of mitigating traffic congestion \cite{kurzhanskiy2010active} and the large number of modeling approaches considered in the literature (see \cite{Helbing:2001uq, Hoogendoorn:2001uq} for reviews), few publications investigate the qualitative dynamical properties of traffic flow models for general network topologies.  For example, models such as \cite{Papageorgiou:1990ij, Messner:1990bs, Lebacque:1996zr} and the celebrated \emph{Cell Transmission Model (CTM)} of Daganzo \cite{Daganzo:1994fk, Daganzo:1995kx} were primarily developed for simulation with few analytical results available. The primary exception is \cite{Gomes:2008fk} which provides a thorough investigation of the CTM when modeling a stretch of highway.  The authors characterize equilibria and stability properties for this specific network class, but the results are not extended to more general networks, the authors assume a specific class of linear supply and demand functions, and the dynamics resulting from infeasible onramp demands are not fully analyzed.

We propose a general model that encompasses the CTM as defined in \cite{Daganzo:1994fk, Daganzo:1995kx, Gomes:2008fk} and extends the model to general nonlinear supply and demand functions and to more general network topologies. Using this model, we significantly extend the few existing results on equilibria and convergence such as \cite{Gomes:2008fk} and we present a simple linear program for obtaining a \emph{ramp metering} control strategy that achieves the maximum possible steady-state network throughput.

Our work is related to the \emph{dynamical flow networks} recently proposed in \cite{Como:2013ve, Como:2013bh} and further studied in \cite{Como:2013qf}.  In \cite{Como:2013ve, Como:2013bh}, downstream supply is not considered and thus downstream congestion does not affect upstream flow, an unrealistic assumption for traffic modeling. In  \cite{Como:2013qf}, the authors allow flow to depend on the density of downstream links, but the paper focuses on throughput optimality of a particular class of routing policies that ensure the resulting dynamics are \emph{cooperative} \cite{Hirsch:1985fk, Angeli:2003fv}. In contrast, the model proposed here is generally not cooperative. Furthermore, the adaptation to the CTM described briefly in \cite[Section II.C]{Como:2013qf} differs from our model in the following important respects: the model as discussed in \cite[Section II.C]{Como:2013qf} assumes a path graph network topology, requires identical links (\emph{i.e.}, identical supply and demand functions), and only considers trajectories in the region in which supply does not restrict flow (that is, $\alpha^v(\dens)=1$ for all $v\in\Verts$ in our model), which is shown to be positively invariant given their assumptions. In this work, we generalize each of these restrictions.   

In a separate direction of research, many network models attempt to apply single road PDE models such as \cite{Lighthill:1955vn, Richards:1956ys} directly to networks, see \cite{Garavello:2006tg} for a thorough treatment. Recent results such as \cite{Canepa:2012fk} and \cite{Han:2012ve} provide analytical tools for traffic network estimation and modeling using PDE models. The CTM and related models, including our proposed model, can be considered to be a discretization of an appropriate PDE model \cite{Lebacque:1996zr}. Alternatively, these models and the model we propose in this work fit into the broad class of \emph{compartmental systems} that model the flow of a substance among interconnected ``compartments'' \cite{Sandberg:1978fk, Maeda:1978fk,Jacquez:1993uq}.

We first proposed a compartmental model of traffic flow in \cite{Coogan:2014ph}. Here, we expand on the conference version by discussing the general lack of cooperativity for our proposed model, identifying how lack of cooperativity can be exploited to increase throughput \emph{via} ramp metering, and providing an explicit optimization problem for obtaining a ramp metering strategy that achieves the maximum possible network throughput.

In Section \ref{sec:model}, we propose the traffic network model.  In Section \ref{sec:mono}, we discuss conditions under which our model is and is not cooperative. In Section \ref{sec:equil}, we characterize existence and uniqueness of equilibrium flows. We demonstrate how the preceding analysis can be used for ramp metering in Section \ref{sec:ramp-metering}. 

\section{Dynamic Model of Traffic}
\label{sec:model}
\subsection{Network structure}

A traffic network consists of a directed graph $\network=(\Verts,\OLinks)$ with \emph{junctions} $\Verts$ and \emph{ordinary links} $\OLinks$ along with a set of \emph{onramps} $\Ramps$ which serve as entry points into the network.  For $l\in\OLinks$, let $\head(l)$ denote the head vertex of link $l$ and let $\tail(l)$ denote the tail vertex of link $l$, and traffic flows from $\tail(l)$ to $\head(l)$. Each onramp $l\in\Ramps$ directs an exogenous input flow onto $\network$ via a junction, and $\head(l)\in \Verts$ for $l\in\Ramps$ denotes the entry junction for onramp $l$. By convention, $\tail(l)=\emptyset$ for all $l\in\Ramps$. Ordinary links (resp., onramps) are denoted with a solid (resp., dashed) arrow in figures. 
\begin{assum}
  The traffic network graph is acyclic.
\end{assum}
Acyclicity is a reasonable assumption when modeling a portion of the road network of particular interest. For example, the road network leading out of a metropolitan area during the evening commute may be modeled as an acyclic graph where road links leading towards the metropolitan area are not modeled due to low utilization by commuters.

Let $\Links\triangleq\OLinks\cup \Ramps$.  For each $v\in\Verts$, we denote by $\Lin_v\subset \Links$ the set of incoming links to node $v$ and by $\Lout_v\subset \Links$ the set of outgoing links, \emph{i.e.},  $\Lin_v=\{l:\head(l)=v\}$ and $\Lout_v=\{l:\tail(l)=v\}$.
We assume $\Lin_v\neq \emptyset$ for all $v\in \Verts$, thus the network flows start at onramps. Furthermore, we assume $\Lout_{\head(l)}\neq \emptyset$ for all $l\in\Ramps$, \emph{i.e.}, onramps always flow into at least one ordinary link downstream.

We define $  \Lstart\triangleq \{l\in\Ramps:\Lin_{\head(l)}\cap \OLinks=\emptyset\}$ to be the set of links that lead to junctions that have only onramps as incoming links, and $\Sink \triangleq \{v\in \Verts: \Lout_v=\emptyset\}$ to be the set of junctions that have no outgoing links. %

\subsection{Link supply and demand}
For each link $l\in \OLinks$, we associate the time-varying density $\dens_l(t)\in[0,\densjam_l]$ where $\densjam_l\in(0,\infty)$ is the \emph{jam density} of link $l$. For $l\in \Ramps$, we associate the time-varying density $\dens_l(t)\in[0,\infty)$, thus onramps have no maximum density, that is, they act as ``queues''. We define $\dens\triangleq\{\dens_l\}_{l\in\Links}$. %

Furthermore, we assume each $l\in\Links$ possesses a \emph{demand} function $\fluxout_l(\dens_l)$  that quantifies the amount of traffic wishing to flow downstream, and we assume each $l\in\OLinks$ possesses a \emph{supply} function $\fluxin_l(\dens_l)$. We make the following assumption on the supply and demand functions:
\begin{assum}
\label{assum:2}
  For each $l\in\OLinks$:%
  \begin{enumerate}
\renewcommand{\labelenumi}{A\arabic{enumi}. }
  \item The demand function $\fluxout_l(\dens_l):[0,\densjam_l]\to \mathbb{R}_{\geq 0}$ is strictly increasing and continuously differentiable\footnote{These assumptions are made to simplify the exposition but can be relaxed to Lipschitz continuity and nonstrict monotonicty beyond the critical density; such functions are considered in the examples.} on $(0,\densjam_l)$ with $\fluxout_l(0)=0$, and $\frac{d }{d\dens_l}\fluxout_l(\dens_l)$ is bounded above.
  \item The supply function $\fluxin_l(\dens_l):[0,\densjam_l]\to \mathbb{R}_{\geq 0}$ is strictly decreasing and continuously differentiable on $(0,\densjam_l)$ with $\fluxin_l(\densjam_l)=0$, and $\frac{d }{d\dens_l}\fluxin_l(\dens_l)$ is bounded below.
  \end{enumerate}
For each $\l\in\Ramps$:
\begin{enumerate}
 \setcounter{enumi}{2}
\renewcommand{\labelenumi}{A\arabic{enumi}. }
\item In addition to A1, $\fluxout_l(\dens_l)$ is bounded above with supremum $\fluxoutmax_l\triangleq\sup\fluxout_l(\dens_l)$ and there exists $M_l>0$ such that\footnote{The bound on the derivative of $\fluxout_l(\dens_l)$ is a very mild technical condition used in the proofs of some propositions. For example, the condition is satisfied when $\fluxout_l(\dens_l)$ attains its maximum.}  $\frac{d}{d\dens_l}\fluxout_l(\dens_l)\leq M_l(1+\dens_l)^{-2}$ for all $\dens_l$.
\end{enumerate}
\end{assum}

Assumption \ref{assum:2} implies that for each $l\in\OLinks$, there exists unique $\densc_l$ such that $\fluxout_l(\densc_l)=\fluxin_l(\densc_l)=:\fluxc_l$. 
Fig. \ref{fig:1} depicts examples of supply and demand functions.%
\begin{figure}
  \centering
\begin{tabular}{c c}
\imagetop{\begin{tikzpicture}
  \begin{axis}[width=.4\textwidth,axis x line=center, axis y line=center, xmin=0, xmax=5, ymax=3.5,ymin=0,tick align=outside,     xtick={1.0362,3.5}, xticklabel shift=-2pt, unit vector ratio*=.8 .8 .8,
    xticklabels={$\densc_l$,$\densjam$}, ytick={1.1270},yticklabels={$\fluxc_l$},
    xlabel=$\dens_l$,
    ylabel=$\flux_l$,
legend style={at={(axis cs: 2.5,-1.6)}, anchor=north}
]
    \addplot+[mark=none,smooth, black] (\x,{2-2*exp(-.8*\x)});
\addlegendentry{$\fluxout_l(\dens_l)$}
\addplot+[domain=0:3.5,mark=none,smooth,black,dashed](\x,{3*exp(-.8*\x)-3*exp(-.8*3.5)});
\addlegendentry{$\fluxin_l(\dens_l)$}
  \end{axis}
\end{tikzpicture}}&
\imagetop{\begin{tikzpicture}
  \begin{axis}[width=.4\textwidth,axis x line=center, axis y line=center, xmin=0, xmax=5, ymax=3.5,ymin=0,tick align=outside, xtick=\empty,unit vector ratio*=.8 .8 .8, 
ytick={2},yticklabels={$\fluxoutmax_l$},
    xlabel=$\dens_l$,
    ylabel=$\flux_l$,
legend style={at={(axis cs: 2.5,-1.6)}, anchor=north}
]
    \addplot+[mark=none,smooth, black] (\x,{2-2*exp(-.8*\x)});
\addlegendentry{$\fluxout_l(\dens_l)$}
\addplot+[mark=none,smooth,black,dotted](\x,2);
  \end{axis}
\end{tikzpicture}}\\
(a) Ordinary link, $l\in\OLinks$&(b) Onramp link, $l\in\Ramps$
\end{tabular}
\caption{Plot of prototypical supply and demand functions $\fluxin(\dens)$ and $\fluxout(\dens)$ for (a) an ordinary road link, and prototypical demand function $\fluxout(\dens)$ for (b) an onramp link. }

  \label{fig:1}
\end{figure}

\subsection{Dynamic Model}
\label{sec:dyn_mod}
We now describe the time evolution of the densities on each link. The domain of interest is
\begin{align}
  \label{eq:61}
  \Domain\triangleq\{\dens:\dens_l\in[0,\infty) \ \forall l\in\Ramps \text{ and }\dens_l\in[0,\densjam_l] \ \forall l\in\OLinks\}.
\end{align}
Let $\Domain^\circ$ denote the interior of $\Domain$.

 For each onramp $l\in\Ramps$, we assume there exists exogenous \emph{input flow} $\inflow_l(t)$. Furthermore, for each $l\in\Links$ we subsequently define an output flow function $\rflowout_l(\dens)$, and for each $l\in\OLinks$ we define an input flow function $\flowin_l(\dens)$, such that 
\begin{alignat}{2}
\label{eq:6all}
\dot{\dens}_l=   F_l(\dens, t)\triangleq \begin{cases}\inflow_l(t)-\rflowout_l(\dens)   &\text{if } l\in\Ramps\\
 \flowin_l(\dens)-\flowout_l(\dens)  &\text{if } l\in\OLinks
\end{cases}
\end{alignat}
where the functions $\flowin_l(\dens)$ and $\flowout_l(\dens)$ are defined below.
When $\inflow_l(t)\equiv \inflow_l$ for constant $\inflow_l$ for all $l\in\Ramps$, the dynamics are autonomous and we write $F_l(\dens)$ instead. We define $F(\dens,t)\triangleq \begin{bmatrix}F_1(\dens,t)&\cdots F_{|\Links|}(\dens,t)\end{bmatrix}'$ for some enumeration of $|\Links|$ where $'$ denotes transpose, and we similarly define $F(\dens)$ when the dynamics are autonomous.

For each $l,k\in\Links$,
\begin{align}
  \label{eq:7}
  \turn_{lk}\in[0,1]%
\end{align}
is the \emph{split} ratio describing the fraction of vehicles flowing out of link $l$ that are routed to link $k$. It follows that $\beta_{lk}>0$ only if $\head(l)=\tail(k)$. We require that  $\sum_{k\in\Lout_v}\turn_{lk}\leq 1$ for all $l\in\Lin_v$
and   $ 1-\sum_{k\in\Lout_{\head(l)}}\turn_{lk}$ is interpreted to be the fraction of the outflow on link $l$ that is routed \emph{off} the network via, \emph{e.g.}, an infinite capacity offramp. To ensure continuity of $\flowout_l(\cdot)$, we make the following assumption:
\begin{assum}
\label{assum:posturn} If $v\not \in \Sink$, then $\turn_{lk}>0$ for all $l\in\Lin_v$ and all $k\in\Lout_v$.
\end{assum}

A large variety of phenomenological rules for determining the outflows of road links have been proposed in the literature; see \cite{Lebacque:2005fk, Lebacque:2004vn} for several examples.  We employ the \emph{proportional priority, first-in-first-out (PP/FIFO)} rule for junctions adapted from \cite{kurzhanskiy2010active}:
\begin{PPFIFO}
For $v\in\Sink$, $\flowout_l(\dens)\triangleq\fluxout_l(\dens_l)$ for all $l\in\Lin_v.$ For each $v\in\Verts\backslash \Sink$, we must ensure that the inflow of each outgoing link does not exceed the link supply. Define
\begin{alignat}{2}
  \label{eq:11-2}
  {\alpha}^v(\dens)\triangleq&&\max_{\alpha\in[0,1]}\quad&\alpha\\
\label{eq:11-3}&&\text{s.t.}\quad&\alpha\sum_{j\in\Lin_v}\turn_{jk} \fluxout_{j}(\dens_{j})\leq \fluxin_k(\dens_k)\quad \forall k\in\Lout_v. 
\end{alignat}
By scaling the demand of each link by $\alpha^v(\dens)$, we ensure that the supply of each downstream link is not violated:
  \begin{alignat}{2}
    \label{eq:11}
    \flowout_l(\dens)\triangleq\alpha^v(\dens)\fluxout_l(\dens_l)\quad \forall{l\in\Lin_v}.
  \end{alignat}
To complete the model, we determine $\flowin_l(\dens)$ from conservation of flow: 
\begin{align}
  \label{eq:9}
  \flowin_{l}(\dens)=\sum_{k\in \Lin_{\tail(l)}}\turn_{kl}\flowout_{k}(\dens)\quad \forall l\in\OLinks.
\end{align}
\end{PPFIFO}

The format of \eqref{eq:11-2} emphasizes the fact that the outflow of a link is the largest possible flow such that neither link demand nor downstream supply is exceeded and such that the outflow of all incoming links at a junction is proportional to the demand of these links. This proportionality constraint gives rise to the \emph{proportional priority} terminology. The fixed turn ratios along with the supply and demand restrictions implies that a lack of supply of an outgoing link restricts flow to other outgoing links, a phenomenon known in the transportation literature as a \emph{first-in-first-out (FIFO)} property \cite{kurzhanskiy2010active, Daganzo:1995kx}.

\subsection{Basic Properties of the PP/FIFO rule}
We first note two properties captured by the proposed network flow model.
\begin{lemma}
\label{lem:simp}
  A simple consequence of the PP/FIFO rule is for all $\dens\in\Domain$,
  \begin{alignat}{2}
    \label{eq:24}
    \flowin_l(\dens)&\leq \fluxin_l(\dens_l)&\qquad &\forall l\in\OLinks\\
    \label{eq:24-2}    \flowout_l(\dens)&\leq \fluxout_l(\dens_l)&&\forall l\in\Links.
  \end{alignat}
\end{lemma}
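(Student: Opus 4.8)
The plan is to prove the two inequalities separately by unwinding the definitions in the PP/FIFO rule, observing that both reduce to the feasibility of $\alpha^v(\dens)$ for its own defining optimization. First I would establish \eqref{eq:24-2}. Every link $l\in\Links$ has a well-defined head vertex $v=\head(l)$, so $l\in\Lin_v$ for exactly one $v$. If $v\in\Sink$, then by definition $\flowout_l(\dens)=\fluxout_l(\dens_l)$ and the bound holds with equality. If $v\in\Verts\setminus\Sink$, then $\flowout_l(\dens)=\alpha^v(\dens)\fluxout_l(\dens_l)$ by \eqref{eq:11}; since the maximization \eqref{eq:11-2} ranges over $\alpha\in[0,1]$ we have $\alpha^v(\dens)\in[0,1]$, and since the demand functions are nonnegative by Assumption \ref{assum:2} we conclude $\flowout_l(\dens)\leq\fluxout_l(\dens_l)$. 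This disposes of \eqref{eq:24-2} for all $l\in\Links$.

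Next, for \eqref{eq:24} I would fix $l\in\OLinks$ and set $v=\tail(l)$. Because $l\in\Lout_v$ is an outgoing link, $\Lout_v\neq\emptyset$ and hence $v\notin\Sink$, so every incoming link $k\in\Lin_v$ satisfies $\flowout_k(\dens)=\alpha^v(\dens)\fluxout_k(\dens_k)$. Substituting this into the conservation relation \eqref{eq:9} and factoring out the common scalar gives
$$\flowin_l(\dens)=\sum_{k\in\Lin_v}\turn_{kl}\flowout_k(\dens)=\alpha^v(\dens)\sum_{k\in\Lin_v}\turn_{kl}\fluxout_k(\dens_k).$$
The key observation is that this right-hand side is precisely the left-hand side of the supply constraint \eqref{eq:11-3} evaluated at the outgoing link $k=l\in\Lout_v$. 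Since $\alpha^v(\dens)$ is the maximizer of \eqref{eq:11-2}, it is feasible and therefore satisfies every constraint in \eqref{eq:11-3}, including the one indexed by $l$; hence the displayed quantity is at most $\fluxin_l(\dens_l)$, which yields \eqref{eq:24}.

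The argument is largely bookkeeping, so I do not expect a genuine analytic obstacle; the points requiring care are structural. I must confirm that \eqref{eq:11-2} is well posed so that $\alpha^v(\dens)$ actually satisfies its constraints: the feasible set is nonempty because $\alpha=0$ satisfies \eqref{eq:11-3} (the supply functions being nonnegative), and it is a closed subinterval of $[0,1]$ since each constraint is affine in $\alpha$, so the supremum is attained. The other delicate point is the index matching in the second inequality, namely verifying that for $l\in\OLinks$ the vertex $v=\tail(l)$ is non-sink with $\Lin_v\neq\emptyset$ and that $l$ itself appears as an outgoing link of $v$, so that the $k=l$ constraint in \eqref{eq:11-3} is exactly the bound needed. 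Once these are checked, the lemma follows directly from the definitions.
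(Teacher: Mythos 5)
Your proof is correct and is exactly the intended argument: the paper states this lemma without proof as an immediate consequence of the PP/FIFO rule, and your unwinding of the definitions (the bound $\alpha^v(\dens)\in[0,1]$ for \eqref{eq:24-2}, and feasibility of $\alpha^v(\dens)$ for the constraint \eqref{eq:11-3} indexed by $k=l$ for \eqref{eq:24}) is precisely the reasoning being left implicit. The structural checks you flag (nonemptiness and closedness of the feasible set, and that $\tail(l)\notin\Sink$ for $l\in\OLinks$) are the right ones and all go through.
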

We note that the domain $\Domain$ in \eqref{eq:61} is easily seen to be positively invariant. %
Furthermore, it is not difficult to establish Lipschitz continuity of \eqref{eq:6all} which ensures global existence and uniqueness of solutions for piecewise continuous input flows $\{d_l(t)\}_{l\in\Ramps}$ \cite[Chapter I]{Hale:1980fk}.
Now suppose $d_l(t)\equiv d_l$ for some constant $d_l$ for all $l\in\Ramps$ so that the dynamics are autonomous. From the PP/FIFO rule, we conclude that $\flowout_l(\dens)$, and thus $F_l(\dens)$, is a continuous selection of differentiable functions determined by the constraints \eqref{eq:11-3}, that is, $F(\dens)$ is \emph{piecewise differentiable} \cite [Section 4.1]{Scholtes:2012fk}.

\section{Lack of cooperativity and its advantages}
\label{sec:mono}
The traffic network with constant input flows is \emph{cooperative} \cite{Hirsch:1985fk} if, for all links $l\in\Links$ and $\dens\in\Domain^{\circ}$, we have
\begin{align}
  \label{eq:62}
  \frac{\partial F_l(\dens)}{\partial \dens_k}\geq 0 \quad \forall k\neq l%
\end{align}
where, when $F_l(\dens)$ is not differentiable (which occurs on a set of measure zero), we interpret the partial derivative in an appropriate directional sense, see \cite[Scetion 4.1.2]{Scholtes:2012fk} for details.
Cooperative systems are order preserving systems with respect to the standard order defined by the positive orthant and are a special class of \emph{monotone} systems \cite{Hirsch:1985fk,Angeli:2003fv}.
We show that traffic networks are, in general, not cooperative. %
In the following example, increased demand of an incoming link at a junction causes a decrease in the inflow entering an outgoing link.

\begin{example}
\label{ex:2}
Consider a road network with two onramps labeled $\{1,2\}$ and two ordinary links labeled $\{3,4\}$ as shown in Fig. \ref{fig:nonmonotone}(b). 
Suppose that $\turn_{13}=\turn_{14}=\frac{1}{2}$, $\turn_{23}=\frac{2}{3}$, and $\turn_{24}=\frac{1}{3}$
and $\{v_2, v_3\}= \Sink$. Furthermore, suppose $\fluxout_i(\rho_i)=\max\{\rho_i,c \} $, for $i=1,2$ and $c\in\mathbb{R}_{>0}$.  Now consider $\dens=(\dens_1,\dens_2,\dens_3,\dens_4)$ such that $\dens_1=\dens_2<\frac{2}{9}c$ and suppose 
the supply of link 3 is the limiting factor for the flow through junction $v_1$ so that $  \flowin_3(\dens)= \fluxin_3(\dens)=\alpha^{v_1}(\dens)\left(\frac{1}{2}\dens_1+\frac{2}{3}\dens_2\right)=\frac{7}{6}\alpha^{v_1}(\dens)\dens_1$ and $\flowin_4(\dens)=\frac{5}{6}\alpha^{v_1}(\dens)\dens_1$. Now consider $\bar{\dens}=(\bar{\dens}_1,\bar{\dens} _2,\bar{\dens} _3,\bar{\dens} _4)\triangleq(\dens_1,\frac{9}{2}\dens_2,\dens_3,\dens_4)$. The supply of link 3 is unchanged, but the total demand from links 1 and 2 for link 3 has tripled so that $\alpha^{v_1}(\bar{\dens})=\frac{1}{3}\alpha^{v_1}(\dens)$. Then
\begin{align}
  \label{eq:142}
\textstyle  \flowin_4(\bar{\dens})=\alpha^{v_1}(\bar{\dens})\left(\frac{1}{2}\bar{\dens}_1+\frac{1}{3}\bar{\dens}_2\right)=\frac{2}{3}\alpha^{v_1}(\dens)\dens_1<\flowin_4(\dens)
\end{align}

Since $\flowout_4(\dens)=\flowout_4(\bar{\dens})=\fluxout_4(\dens_4)$, we have   $\dot{\bar{\dens}}_4(0)<\dot{\dens}_4(0)$ and thus there exists $\epsilon>0$ such that $\dens_4(\epsilon)>\bar{\dens}_4(\epsilon)$, showing the system is not cooperative.

 \begin{figure}
    \centering
\begin{tabular}{>{\centering\arraybackslash}m{3in}  >{\centering\arraybackslash}m{2in}}
    \begin{tikzpicture}[scale=.8]
\node (j0) at (0,0) {};
\node[junc] (j1) at ($(j0)+(0:.8in)$) {$v_1$};
\node[junc] (j2) at ($(j1)+(0:.8in)$) {$v_2$};
\node[junc] (j3) at ($(j2)+(0:.8in)$) {$v_3$};
\node[junc] (j4) at ($(j3)+(0:.8in)$) {$v_4$};
\draw[onramp] (j0)--(j1);
\draw[link] (j1)--(j2);
\draw[link] (j2)--(j3);
\draw[link] (j3)--(j4);
\draw[onramp] ($(j1)+(210:.8in)$)--(j1);
\draw[onramp] ($(j2)+(210:.8in)$)--(j2);
\draw[onramp] ($(j3)+(210:.8in)$)--(j3);
    \end{tikzpicture}\\
(a)\\
    \begin{tikzpicture}[scale=.8]
\node[junc] (j1) at (0,0) {$v_1$};
\draw[onramp] ($(j1)+(150:.8in)$) -- node[above]{$1$} (j1);
\draw[onramp] ($(j1)+(210:.8in)$) -- node[above]{$2$} (j1);
\node[junc] (j2) at ($(j1)+(30:.8in)$) {$v_2$};
\node[junc] (j3) at ($(j1)+(-30:.8in)$) {$v_3$};
\draw[link] (j1) -- node[above]{3}(j2);
\draw[link] (j1) -- node[above]{4}(j3);
    \end{tikzpicture}\\
(b)
\end{tabular}
    \caption{ (a) A network that models a stretch of highway with onramps. Each junction is such that $|\Lout_v|\leq 1$, \emph{i.e.}, each junction is a merge. ``Offramps'' are only modeled through the split ratios at junctions. This system is cooperative. (b) A network with two onramps $\{1,2\}$ and two ordinary links $\{3,4\}$. This example system is not cooperative due to the proportional priority assumption at junction $v_1$. In particular, increased density on link $2$ can decrease the flow entering link $4$.}
    \label{fig:nonmonotone}
  \end{figure}
\end{example}
 Far from being a negative property of the model, lack of cooperativity is the main reason why ramp metering can increase network throughput or decrease average travel time. By metering the outflow of onramp $2$ in Example \ref{ex:2}, it would be possible to increase the inflow to link $4$, thereby increasing throughput.

\section{Equilibria and stability with constant input flows}
\label{sec:equil}
We now characterize the equilibria possible from the above model with constant input flow $\{\inflow_l\}_{l\in \Ramps}$. We will investigate the case where $\lim_{t\to\infty}F_l(\dens(t))= 0$ for all $l\in\Links$, and, when input flow exceeds network capacity, the case where $\lim_{t\to\infty}F_l(\dens(t))= 0$ for all $l\in\OLinks$ and $\lim_{t\to\infty}\flowout_l(\dens(t))=c_l\leq d_l$ for some constant $c_l$ for all $l\in\Ramps$. In the latter case, the density of some onramps (specifically, those with $c_l<d_l$) will diverge to infinity, but we will see that a meaningful definition of equilibrium nonetheless exists. From a practical point of view, such a characterization is useful, \emph{e.g.}, during ``rush hour'' when the input flow of a traffic network may exceed network capacity for a limited but extended period of time.

Define
$  \flowout_{\Ramps}(\dens)\triangleq
\label{eq:13-end}  \begin{bmatrix}
    \flowout_1(\dens)&\ldots&    \flowout_{|\Ramps|}(\dens)\end{bmatrix}'$, and likewise for $ \flowin_{\OLinks}(\dens)$ and $ \flowout_{\Ramps}(\dens)$
for some enumeration of $\OLinks$ and $\Ramps$. The dynamics \eqref{eq:9}  have the form
\begin{align}
  \label{eq:14}
  \flowin_{\OLinks}(\dens)=\Routea   \flowout_{\OLinks}(\dens)+\Routeb\flowout_{\Ramps}(\dens).
\end{align}
where $\Routea_{lk}=\beta_{kl}$ for $l,k\in\OLinks$ and $\Routeb_{lk}=\beta_{kl}$ for $l\in\OLinks$, $k\in\Ramps$. Acyclicity ensures $(I-A)$ is invertible: this can be seen by noting that the only solution to the equation $f=Af$ is $f=0$, which follows by a cascading argument since $f_l=0$ for $l\in L_1\triangleq \{l\in\OLinks \mid \Lin_{\tail(l)}\subset \Ramps\}$ since $A_{lk}=0$ for all $k$ for $l\in L_1$, then $f_l=0$ for $l\in L_2\triangleq \{l\in\OLinks\mid \Lin_{\tail(l)}\subset L_1\cup \Ramps\}$ since $A_{lk}=0$ for $k\not \in L_1$ for $l\in L_2$, \emph{etc.}  That is, a nonzero solution implies vehicles remain in the network indefinitely, implying existence of a cycle in the network.

\subsection{Feasible input flows}
\begin{definition}
\label{def:equil1}
The constant input flow $\{\inflow_l\}_{l\in\Ramps}$ is \emph{feasible} if there exists density $\dense\triangleq\{\dense_l\}_{l\in\Links}\in\Domain$ such that
\begin{alignat}{2}
  \label{eq:17}
\flowout_l(\dense)&=d_l&&\forall l\in\Ramps\\
\label{eq:17-2}\flowout_l(\dense)&=\flowin_l(\dense) \qquad &&\forall l\in\OLinks.
\end{alignat}
We define $\flowe_l\triangleq \flowout_l(\dense)$ for all $l\in \Links$, and the set $\{\flowe_l\}_{l\in\Links}$ is called an \emph{equilibrium flow}.
\end{definition}
If the input flow is not feasible, it is said to be \emph{infeasible}. It is clear that for a feasible input flow $\{\inflow_l\}_{l\in\Ramps}$, we must have for all $l\in\Ramps$:
  \begin{align}
    \label{eq:2}
\inflow_l&\leq \fluxoutmax_l \qquad \ \substack{\text{\small if there exists $\dens_l^*<\infty$ such that}\\\text{\small $\fluxout_l(\dens_l)=\fluxoutmax_l$ for all $\dens_l\geq \dens_l^*$}}\\
\label{eq:2-2}\text{or}\quad\inflow_l&< \fluxoutmax_l \qquad \text{ if $\fluxout_l(\dens_l)<\fluxoutmax_l$ for all $\dens_l\in[0,\infty)$}.
  \end{align}

\begin{prop}
\label{prop:lessthancrit}
An equilibrium flow $\{\flowe_l\}_{l\in\Links}$ with corresponding equilibrium densities $\{\dense_l\}_{l\in\Links}$ satisfies
  \begin{align}
    \label{eq:10}
\flowe_l\leq \fluxc_l\qquad \forall l\in\OLinks.
  \end{align}
\end{prop}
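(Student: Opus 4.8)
The plan is to combine the two inequalities of Lemma~\ref{lem:simp} with the monotonicity of the supply and demand functions. The crucial observation is that at an equilibrium an ordinary link's flow is simultaneously constrained from above by its own demand (through the outflow) and by its own supply (through the inflow), and these two constraints pin the flow below the critical value $\fluxc_l$.

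First I would fix $l\in\OLinks$. By the equilibrium condition~\eqref{eq:17-2} we have $\flowout_l(\dense)=\flowin_l(\dense)$, and by definition $\flowe_l=\flowout_l(\dense)$. Applying the two bounds of Lemma~\ref{lem:simp} to this common value gives
\begin{align}
  \flowe_l=\flowout_l(\dense)&\leq \fluxout_l(\dense_l),\\
  \flowe_l=\flowin_l(\dense)&\leq \fluxin_l(\dense_l),
\end{align}
so that $\flowe_l\leq\min\{\fluxout_l(\dense_l),\fluxin_l(\dense_l)\}$.

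Next I would invoke Assumption~\ref{assum:2}: $\fluxout_l$ is strictly increasing and $\fluxin_l$ is strictly decreasing, and they meet at the unique point $\densc_l$ with $\fluxout_l(\densc_l)=\fluxin_l(\densc_l)=\fluxc_l$. I would then split on the location of $\dense_l$ relative to $\densc_l$. If $\dense_l\leq\densc_l$, monotonicity of the demand gives $\fluxout_l(\dense_l)\leq\fluxout_l(\densc_l)=\fluxc_l$; if $\dense_l\geq\densc_l$, monotonicity of the supply gives $\fluxin_l(\dense_l)\leq\fluxin_l(\densc_l)=\fluxc_l$. In either case the minimum above is bounded by $\fluxc_l$, which yields $\flowe_l\leq\fluxc_l$ and hence~\eqref{eq:10}.

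There is little genuine obstacle here; the only point requiring care is to use \emph{both} inequalities of Lemma~\ref{lem:simp} on the same equilibrium quantity rather than just one. Bounding $\flowe_l$ only by the demand $\fluxout_l(\dense_l)$ or only by the supply $\fluxin_l(\dense_l)$ is insufficient, since each of these can individually exceed $\fluxc_l$ (the demand does so above the critical density, the supply below it). It is precisely the fact that the equilibrium flow lies below both simultaneously, together with the demand/supply crossing at the critical density, that forces the bound.
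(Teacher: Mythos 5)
Your proof is correct and is essentially the paper's argument: both rest on applying the two inequalities of Lemma~\ref{lem:simp} to the common equilibrium value $\flowe_l=\flowout_l(\dense)=\flowin_l(\dense)$ and then using the monotone demand/supply curves crossing at $\densc_l$. The only difference is presentational — you argue directly by a case split on $\dense_l$ versus $\densc_l$, while the paper phrases the same reasoning as a proof by contradiction.
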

\begin{proof}
  Suppose there exists $l\in\OLinks$ such that $\flowe_l> \fluxc_l$. By the definition of equilibrium flow, we have $\flowout_l(\dense)=\flowin_l(\dense)=\flowe_l$. Since $\flowout_l(\dens)\leq\fluxout_l(\dens_l)$ for all $\dens$ by \eqref{eq:24}, we have $\fluxout_l(\dense_l)>\fluxc_l$. But by Assumption \ref{assum:2}, for all $\dens_l$ such that $\fluxout_l(\dens_l)>\fluxc_l$, it must be $\fluxin_l(\dens_l)\leq \fluxc_l$ and thus $\flowin_l(\dense)=\flowe_l>\fluxin_l(\dens_l)$, which contradicts \eqref{eq:24-2}.
\end{proof}

\begin{prop}
\label{prop:freeflow}
Assume \eqref{eq:2}--\eqref{eq:2-2}.  An input flow $\{\inflow_l\}_{l\in\Ramps}$ is feasible if and only if
  \begin{align}
  \label{eq:18}
    (I-\Routea)^{-1}\Routeb d\leq \fluxc
  \end{align}
where $d\triangleq\begin{bmatrix}d_1&\ldots&d_{|\Ramps|}\end{bmatrix}'$, $\fluxc \triangleq
\begin{bmatrix}
  \fluxc_1&\cdots&\fluxc_{|\mathcal{O}|}
\end{bmatrix}'$, and $\leq$ denotes elementwise inequality.
Furthermore, for feasible input flows, the equilibrium flow $\{\flowe_l\}_{l\in\OLinks}$ is unique.
\end{prop}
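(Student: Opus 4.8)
The plan is to observe that feasibility forces the ordinary-link equilibrium flow to equal $(I-\Routea)^{-1}\Routeb d$, after which both directions follow, with sufficiency handled by a free-flow density construction. For \emph{necessity} and \emph{uniqueness}, suppose $\{\inflow_l\}_{l\in\Ramps}$ is feasible with equilibrium density $\dense$ and equilibrium flow $\{\flowe_l\}$. By \eqref{eq:17} we have $\flowout_\Ramps(\dense)=d$, while \eqref{eq:17-2} says $\flowin_\OLinks(\dense)=\flowout_\OLinks(\dense)=\flowe_\OLinks$; substituting into the conservation identity \eqref{eq:14} gives $\flowe_\OLinks=\Routea\flowe_\OLinks+\Routeb d$, i.e. $(I-\Routea)\flowe_\OLinks=\Routeb d$. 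Since $(I-\Routea)$ is invertible by the cascading argument following \eqref{eq:14}, this yields $\flowe_\OLinks=(I-\Routea)^{-1}\Routeb d$, which proves the uniqueness claim outright. Proposition \ref{prop:lessthancrit} then gives $\flowe_l\leq\fluxc_l$ for all $l\in\OLinks$, which is precisely \eqref{eq:18}.

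For \emph{sufficiency}, assume \eqref{eq:18} and define $\flowe_\OLinks\triangleq(I-\Routea)^{-1}\Routeb d$. Acyclicity makes $\Routea$ nilpotent under a topological ordering of $\OLinks$, so $(I-\Routea)^{-1}=\sum_{n\geq0}\Routea^n$ is a finite nonnegative sum; since $\Routeb,d\geq0$ we get $\flowe_l\geq0$, and with \eqref{eq:18} we have $\flowe_l\in[0,\fluxc_l]$ for each $l\in\OLinks$. I then build an equilibrium density lying entirely in free flow. For each $l\in\OLinks$, strict monotonicity and continuity of $\fluxout_l$ on $[0,\densc_l]$ with $\fluxout_l(0)=0$ and $\fluxout_l(\densc_l)=\fluxc_l$ (Assumption \ref{assum:2}) furnish a unique $\dense_l\in[0,\densc_l]$ with $\fluxout_l(\dense_l)=\flowe_l$; for each $l\in\Ramps$, the hypotheses \eqref{eq:2}--\eqref{eq:2-2} yield a finite $\dense_l$ with $\fluxout_l(\dense_l)=d_l$ by the intermediate value theorem, taking $\dense_l=\dens_l^*$ when the supremum is attained and $d_l=\fluxoutmax_l$.

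It remains to verify that this $\dense$ is an equilibrium, which reduces to showing $\alpha^v(\dense)=1$ at each $v\in\Verts\setminus\Sink$. Testing $\alpha=1$ in the constraint \eqref{eq:11-3}, the left-hand side is $\sum_{j\in\Lin_v}\turn_{jk}\fluxout_j(\dense_j)=\sum_{j\in\Lin_v}\turn_{jk}\flowe_j=\flowe_k$, where the last step is the $k$-th row of $(I-\Routea)\flowe_\OLinks=\Routeb d$, while the right-hand side is $\fluxin_k(\dense_k)$; since $\dense_k\leq\densc_k$ and $\fluxin_k$ is decreasing, $\fluxin_k(\dense_k)\geq\fluxin_k(\densc_k)=\fluxc_k\geq\flowe_k$, so $\alpha=1$ is admissible and hence $\alpha^v(\dense)=1$. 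Then $\flowout_l(\dense)=\fluxout_l(\dense_l)=\flowe_l$ for all links and $\flowin_l(\dense)=\flowe_l=\flowout_l(\dense)$ for all $l\in\OLinks$, so $\dense$ satisfies \eqref{eq:17}--\eqref{eq:17-2} and the input flow is feasible. I expect the main obstacle to be exactly this last verification: one must choose each density on the uncongested branch so that every supply constraint is slack ($\fluxin_k(\dense_k)\geq\fluxc_k$) and the proportional-priority factors collapse to $1$; once the free-flow densities are in place, the equilibrium identities follow immediately.
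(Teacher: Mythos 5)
Your proposal is correct and follows essentially the same route as the paper's own proof: uniqueness from the linear conservation identity $\flowe_\OLinks=\Routea\flowe_\OLinks+\Routeb d$ and invertibility of $(I-\Routea)$, necessity from Proposition \ref{prop:lessthancrit}, and sufficiency by constructing a free-flow equilibrium density with $\dense_l\le\densc_l$ and checking that $\alpha^v(\dense)=1$ at every junction. Your write-up is in fact slightly more explicit than the paper's (nonnegativity of $\flowe_\OLinks$ via nilpotency of $\Routea$, and the chain $\fluxin_k(\dense_k)\ge\fluxc_k\ge\flowe_k$), but there is no substantive difference.
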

\begin{proof}
(uniqueness) By \eqref{eq:14}, an equilibrium flow of a feasible input flow satisfies
  \begin{align}
    \label{new_eq:19}
    \floweO=\Routea \floweO+\Routeb d
  \end{align}
where $\floweO$ is the vector of equilibrium flows for $\OLinks$.  Thus $\floweO=(I-\Routea)^{-1}\Routeb d$ is the unique solution to \eqref{new_eq:19}. 

  (only if) Applying Proposition \ref{prop:lessthancrit} to the unique $\floweO$ above gives necessity.

(if) Let $\floweO=(I-\Routea)^{-1}\Routeb d$ be a candidate equilibrium flow. From \eqref{eq:18}, there exists unique $\{\dense_l\}_{l\in\OLinks}$ such that $\flowe_l=\fluxout_l(\dense_l)$ for which $\flowe_l\leq \fluxin_l(\dense_l)$ for all $\l\in\OLinks$. Furthermore, there exists $\{\dense_l\}_{l\in\Ramps}$ such that $\fluxout_{l}(\dense_l)=d_l$ for all $\l\in\Ramps$ by \eqref{eq:2}--\eqref{eq:2-2}.  We now show that these flows satisfy the PP/FIFO rule. We first show that $\alpha^v(\dense)=1$. Considering \eqref{eq:11-2}--\eqref{eq:11-3}, for all $v\in\Links\backslash \Sink$:
\begin{alignat}{2}
  \label{eq:20}
\sum_{l\in\Lin_v}\turn_{lk}^v\fluxout_l(\dense_l)&=\sum_{l\in(\Lin_v\cap \OLinks)}\turn_{lk}^v\flowe_l+\sum_{l\in(\Lin_v\cap \Ramps)}\turn_{lk}^vd_l&\qquad &\forall k\in\Lout_v\\
\label{eq:20-2}&=\flowe_k&\qquad &\forall k\in\Lout_v\\
&\leq \fluxin_k(\dense_k)&\qquad &\forall k\in\Lout_v
\end{alignat}
where \eqref{eq:20-2} follows from \eqref{new_eq:19}, and thus  \eqref{eq:11-2}--\eqref{eq:11} is satisfied when $\alpha^v(\dense)=1$ for all $v\in\Verts$.
Also, $\flowout_l(\dens)=\fluxout_l(\dens_l)$ for all $l\in \Lin_v$ for all $v\in\Sink$ and \eqref{eq:9} follows from \eqref{new_eq:19}, thus proving sufficiency.
\end{proof}
While the equilibrium flow for a feasible input flow is unique by Proposition \ref{prop:freeflow}, in general, multiple equilibria densities may support this equilibrium flow. However there does exists a unique equilibrium for which each link is in \emph{freeflow}:
\begin{definition}
  An ordinary link $l\in\OLinks$ is said to be in \emph{freeflow} if $\flowout_l(\dens)=\fluxout_l(\dens_l)$. Otherwise, link $l$ is \emph{congested}.
\end{definition}
Note that at equilibrium, if link $l\in\OLinks$ is in freeflow, then necessarily $\dens_l\leq \densc_l$.
\begin{corollary}
\label{cor:freeflow_equil}
For a feasible input flow, there exists a unique equilibrium density $\{\dense_l\}_{l\in\Links}$ such that each link $\l\in\OLinks$ is in freeflow.
\end{corollary}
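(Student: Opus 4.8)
The plan is to leverage the uniqueness of the equilibrium flow established in Proposition \ref{prop:freeflow} and to reduce the problem to inverting each demand function on its strictly monotone freeflow branch. By Proposition \ref{prop:freeflow}, feasibility yields a unique equilibrium flow with $\floweO=(I-\Routea)^{-1}\Routeb d$ for the ordinary links and $\flowe_l=d_l$ for each onramp $l\in\Ramps$; moreover $\flowe_l\le\fluxc_l$ for all $l\in\OLinks$ by \eqref{eq:18} (equivalently by Proposition \ref{prop:lessthancrit}). Since any equilibrium density must reproduce this same flow, the task reduces to showing that requiring \emph{freeflow} on every ordinary link pins down the density vector uniquely.

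For existence, I would simply reuse the ``if'' direction of the proof of Proposition \ref{prop:freeflow}: there, for each $l\in\OLinks$ one selects the density $\dense_l$ solving $\fluxout_l(\dense_l)=\flowe_l$, and since $\flowe_l\le\fluxc_l$ this $\dense_l$ lies on the increasing (demand) branch, i.e. $\dense_l\le\densc_l$, so link $l$ is in freeflow. That same argument verifies $\alpha^v(\dense)=1$ at every junction, so the constructed $\dense$ is genuinely an equilibrium with all ordinary links in freeflow.

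For uniqueness, I would take any equilibrium density $\dense$ in which every ordinary link is in freeflow. Freeflow means $\flowout_l(\dense)=\fluxout_l(\dense_l)$, and the equilibrium conditions \eqref{eq:17}--\eqref{eq:17-2} together with uniqueness of the flow force $\fluxout_l(\dense_l)=\flowe_l$ for every $l\in\OLinks$ and $\fluxout_l(\dense_l)=d_l$ for every $l\in\Ramps$. For ordinary links, $\fluxout_l$ is continuous, strictly increasing with $\fluxout_l(0)=0$ (Assumption \ref{assum:2}), and $\flowe_l\le\fluxc_l=\fluxout_l(\densc_l)$ lies in its range, so there is exactly one $\dense_l\in[0,\densc_l]$ with $\fluxout_l(\dense_l)=\flowe_l$. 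Applying the same strict monotonicity to each onramp demand determines $\dense_l$ uniquely from $d_l$, giving uniqueness of the full vector $\{\dense_l\}_{l\in\Links}$.

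The step requiring the most care is the onramp inversion. Under the strict-monotonicity assumptions (A1, A3) an onramp demand that is bounded above never attains its supremum at finite density, so feasibility places us in case \eqref{eq:2-2} with $d_l<\fluxoutmax_l$ strictly inside the range, and the preimage is a single point. In the relaxed (nonstrict) setting of the footnote to A1, where $\fluxout_l$ may equal $\fluxoutmax_l$ on a ray and $d_l=\fluxoutmax_l$ is permitted by \eqref{eq:2}, the onramp density is only determined up to that ray; one would then single out the minimal saturating density, and I would note that this ambiguity is confined to saturated onramps and does not affect the uniqueness of the ordinary-link densities, which is the content of interest.
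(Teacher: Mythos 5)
Your proposal is correct and follows essentially the same route as the paper: the paper's proof is a one-line pointer to the construction in the ``if'' direction of Proposition \ref{prop:freeflow}, which is exactly the existence argument you reuse, and the uniqueness you spell out via strict monotonicity of $\fluxout_l$ is already implicit there (the paper's construction states ``there exists unique $\{\dense_l\}_{l\in\OLinks}$ such that $\flowe_l=\fluxout_l(\dense_l)$''). Your closing caveat about inverting a saturated onramp demand in the relaxed, nonstrict setting is a reasonable observation but goes beyond anything the paper itself addresses.
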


\begin{proof}
  Such an equilibrium density is constructed in the proof of the ``if'' direction in the proof of Proposition \ref{prop:freeflow}.
\end{proof}

Furthermore, if the input flow is \emph{strictly feasible}, then the equilibrium density is unique, and, moreover, it is asymptotically stable:
\begin{definition}
A feasible input flow $\{\inflow_l\}_{l\in\Ramps}$ is said to be \emph{strictly feasible} if the corresponding (unique) equilibrium flow satisfies $\flowe_l<\fluxc_l$ for all $l\in\OLinks$.  
\end{definition}
\begin{prop}
\label{prop:strictfeas1}
If the input flow $\{\inflow_l\}_{l\in\Ramps}$ is strictly feasible, then the equilibrium density is unique and it is locally asymptotically stable.
\end{prop}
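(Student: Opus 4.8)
The plan is to prove the two assertions separately, reducing both to the observation that strict feasibility forces the equilibrium to lie in the \emph{freeflow} region, where the dynamics become cooperative and, after a topological relabeling, triangular. Throughout I use that $\densc_l\in(0,\densjam_l)$ and $\fluxc_l>0$, which follow from $\fluxout_l(0)=0<\fluxin_l(0)$ together with Assumption~\ref{assum:2}.

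\textbf{Uniqueness.} By Proposition~\ref{prop:freeflow} the equilibrium flow $\{\flowe_l\}$ is already unique, so it suffices to show that \emph{every} equilibrium density $\dense$ supporting it has all ordinary links in freeflow; uniqueness then follows from Corollary~\ref{cor:freeflow_equil}. I would argue by contradiction: suppose some $l\in\OLinks$ is congested, so that $\alpha^{\head(l)}(\dense)<1$ and $\fluxout_l(\dense_l)>0$. Maximality of $\alpha^{\head(l)}$ in \eqref{eq:11-2}--\eqref{eq:11-3} forces the supply constraint at $v=\head(l)$ to be tight for some outgoing $k\in\Lout_v$, and that tight left-hand side equals $\flowin_k(\dense)=\flowe_k$ at equilibrium, so $\fluxin_k(\dense_k)=\flowe_k<\fluxc_k$. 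Since $\fluxin_k$ is strictly decreasing with $\fluxin_k(\densc_k)=\fluxc_k$, this gives $\dense_k>\densc_k$, whence $\fluxout_k(\dense_k)>\fluxc_k>\flowe_k=\flowout_k(\dense)$; that is, link $k$ is itself congested, and $\head(k)\notin\Sink$ (otherwise its outflow would equal its demand). Iterating produces a directed path $l,k,k',\dots$ of congested links; acyclicity makes these links distinct, contradicting finiteness of $\Links$. Hence no link is congested and the freeflow density of Corollary~\ref{cor:freeflow_equil} is the only equilibrium.

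\textbf{Local asymptotic stability.} At the freeflow equilibrium the supply each ordinary link offers strictly exceeds the inflow it must accept: freeflow gives $\fluxout_l(\dense_l)=\flowe_l<\fluxc_l$, hence $\dense_l<\densc_l$ and $\fluxin_l(\dense_l)>\fluxc_l>\flowe_l=\flowin_l(\dense)$. By continuity these strict inequalities persist on a neighborhood of $\dense$, on which every constraint \eqref{eq:11-3} is slack, so $\alpha^v(\dens)\equiv1$ and the dynamics reduce to the smooth field $\dot\dens_l=d_l-\fluxout_l(\dens_l)$ for $l\in\Ramps$ and $\dot\dens_l=\sum_{k\in\Lin_{\tail(l)}}\turn_{kl}\fluxout_k(\dens_k)-\fluxout_l(\dens_l)$ for $l\in\OLinks$. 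On this region $F$ is $C^1$ and $\partial F_l/\partial\dens_k=\turn_{kl}\,\frac{d}{d\dens_k}\fluxout_k(\dense_k)\geq0$ for $k\neq l$, so the Jacobian is Metzler (the system is locally cooperative, as anticipated in Section~\ref{sec:mono}). Relabeling links by a topological sort of the acyclic graph (onramps first, each link after its predecessors in $\Lin_{\tail(l)}$) makes the Jacobian lower triangular with diagonal entries $-\frac{d}{d\dens_l}\fluxout_l(\dense_l)$, so its eigenvalues are exactly these diagonal entries; strict monotonicity of each $\fluxout_l$ makes them negative, the Jacobian Hurwitz, and Lyapunov's indirect method yields local asymptotic stability.

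\textbf{Main obstacle.} The delicate step is the uniqueness argument: excluding congested equilibria is not a local computation but relies on propagating a tight supply constraint strictly downstream and invoking acyclicity to terminate the chain. A secondary technical point is that mere strict monotonicity permits $\frac{d}{d\dens_l}\fluxout_l(\dense_l)=0$, or that some $\flowe_l=0$ places $\dense_l$ on the boundary of $\Domain$; in either case the linearization is inconclusive and I would instead exploit the triangular (cascade) structure directly, stabilizing the onramp queues via the scalar comparison $\dot\dens_l=d_l-\fluxout_l(\dens_l)$ and then propagating convergence downstream layer by layer, each layer being a scalar monotone relaxation $\dot\dens_l=\flowe_l-\fluxout_l(\dens_l)$ driven by an already-converged input.
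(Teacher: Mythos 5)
Your proof is correct and follows essentially the same route as the paper: uniqueness by propagating a tight supply constraint downstream along a chain of congested links until acyclicity and finiteness yield a contradiction with $\flowe_k<\fluxc_k$, and stability by topologically ordering the links so that the Jacobian at the freeflow equilibrium is lower triangular with negative diagonal, hence Hurwitz. Your closing caveat (that strict monotonicity alone permits $\frac{d}{d\dens_l}\fluxout_l(\dense_l)=0$, in which case one should fall back on the cascade/comparison argument) is a fair point that the paper's assertion of \eqref{eq:57} glosses over, but it does not change the substance of the argument.
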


\begin{proof}
(uniqueness) We claim if $\flowe_l<\fluxout_l(\dense_l)$ for any $l\in\Links$, then there exists $k\in \OLinks$ such that $\flowe_k=\fluxc_k$. For now we take this claim to be true and note that it contradicts the hypothesis of strictly feasible flows. Thus we conclude $\flowe_l=\fluxout_l(\dense_l)$ for all $l\in\Links$. Furthermore, since $\fluxout_l(\cdot)$ is strictly increasing, $\dense_l$ is unique.

To prove the claim, suppose $\flowe_l<\fluxout_l(\dense_l)$ for some $l\in\Links$ and $\flowe_k<\fluxc_k$ for all $k\in \OLinks$. There must exist $l'\in\Lout_{\head(l)}$ such that $\flowe_{l'}=\fluxin_{l'}(\dense_{l'})$, \emph{i.e.} there exists a lack of supply on link $l'$ since the flow on link $l$ is less than demand. Since $\flowe_{l'}<\fluxc_{l'}$ by assumption, we then have $\flowe_{l'}<\fluxout_{l'}(\dense_{l'})$ and we find another $l''\in\Lout_{\head(l')}$ such that $\flowe_{l''}=\fluxin_{l''}(\dense_{l''})$, but this cannot continue indefinitely since the traffic network is acyclic and finite, thus there exists $k\in \OLinks$ such that $\flowe_k=\fluxc_k$.

(stability) Because the traffic network is directed and acyclic, it is a standard graph theoretic result that there exists a topological ordering on the junctions. From this topological ordering on the junctions, we enumerate the links with the numbering function $e(\cdot):\Links\to\{0,\ldots,|\Links|\}$ where $e(l)$ is the enumeration of link $l$ such that for each $v$ and $l\in\Lin_v$, $k\in\Lout_v$, we have $e(l)<e(k)$ (Such an enumeration can be accomplished by, \emph{e.g.}, first enumerating $\Lin_{v_1}$, then $\Lin_{v_2}$, \emph{etc.} where $v_1,\ldots,v_{|\Verts|}$ are the junctions in topological order). Furthermore, $\flowout_l(\dense)=\fluxout_l(\dense_l)$ for all $l$, and it is straightforward to show
  \begin{align}
    \label{eq:57}
    \frac{\partial F_l}{\partial \dens_l}(\dense)<0&\qquad\forall{l\in\Links}
  \end{align}
and
\begin{align}
  \label{eq:58}
    \frac{\partial F_l}{\partial \dens_k}(\dense)=0&\qquad \text{if }k\not \in \Lin_{\tail(l)},
\end{align}
thus the Jacobian matrix of the traffic network flow evaluated at the equilibrium, $\frac{\partial F}{\partial \dens}(\dense)$, is lower triangular with strictly negative entries along the diagonal and therefore Hurwitz. Asymptotic stability within a neighborhood of the equilibrium follows from, \emph{e.g.}, \cite[Theorem 4.7]{khalil}.
\end{proof}

Note that Corollary \ref{cor:freeflow_equil} implies each link $l\in\OLinks$ is in freeflow for the unique equilibrium in Proposition \ref{prop:strictfeas1}. 
Stability in Proposition \ref{prop:strictfeas1} can also proved using the Lyapunov function $||F(\dens)||_1$. We remark that the dynamics are cooperative when all links are in free-flow. This fact allows us to conclude convergence to the equilibrium from the invariant box $\{\dens:0\leq\dens_l\leq\dense_l\ \forall l\in\Links\}$, and the argument can be extended to (not necessarily strictly) feasible
input flows as in the following:
  \begin{prop}
    \label{prop:roc}
    For a feasible input flow, all trajectories $\dens(t)$ such that $0\leq \dens_l(0)\leq \dense_l$ for all $l\in\Links$ converge to $\{\dense_l\}_{l\in\Links}$ where $\{\dense_l\}_{l\in\Links}$ is the unique equilibrium density in Corollary \ref{cor:freeflow_equil} for which all links $l\in\OLinks$ are in freeflow. That is,
    \begin{align}
      \label{eq:80}
      \lim_{t\to\infty} \dens_l(t)=\dense_l\qquad \text{ if \quad}0\leq \dens_l(0)\leq \dense_l\ \forall l\in\Links.
    \end{align}
  \end{prop}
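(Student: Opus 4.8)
The plan is to build on the remark immediately preceding the statement: on the order interval $\Ball\triangleq\{\dens:0\le\dens_l\le\dense_l\ \forall l\in\Links\}$ the dynamics are cooperative, so I would run a monotone-systems sandwiching argument, trapping an arbitrary trajectory between the constant trajectory sitting at the top corner $\dense$ and the trajectory emanating from the bottom corner $0$, and show both bounding trajectories converge to $\dense$.

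The first substantive step is to show that every $\dens\in\Ball$ is in freeflow, i.e. $\alpha^v(\dens)=1$ for all $v$. Fix $v\notin\Sink$ and $k\in\Lout_v$. By Corollary \ref{cor:freeflow_equil} all links are in freeflow at $\dense$, so $\alpha^v(\dense)=1$ and hence $\sum_{j\in\Lin_v}\turn_{jk}\fluxout_j(\dense_j)\le\fluxin_k(\dense_k)$. For $\dens\le\dense$, monotonicity of the demand and supply functions (Assumption \ref{assum:2}) gives
\begin{align*}
\sum_{j\in\Lin_v}\turn_{jk}\fluxout_j(\dens_j)\le\sum_{j\in\Lin_v}\turn_{jk}\fluxout_j(\dense_j)\le\fluxin_k(\dense_k)\le\fluxin_k(\dens_k),
\end{align*}
so $\alpha=1$ already satisfies the constraint \eqref{eq:11-3}, forcing $\alpha^v(\dens)=1$. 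Consequently, on $\Ball$ we have $\flowout_l(\dens)=\fluxout_l(\dens_l)$ for every link and $\flowin_l(\dens)=\sum_{k}\turn_{kl}\fluxout_k(\dens_k)$ for $l\in\OLinks$. In particular, $F$ restricted to $\Ball$ coincides with the globally smooth \emph{freeflow} vector field, whose off-diagonal partials are $\turn_{kl}\frac{d}{d\dens_k}\fluxout_k\ge0$, so this vector field is (globally) cooperative; this lets me invoke monotone-systems results without worrying about the nonsmoothness of $F$.

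Next I would verify that $\Ball$ is positively invariant by checking that $F$ points inward on each face. On the bottom faces $\dens_l=0$, $F_l(\dens)=d_l\ge0$ for $l\in\Ramps$ and $F_l(\dens)=\flowin_l(\dens)\ge0$ for $l\in\OLinks$; on the top faces $\dens_l=\dense_l$, the freeflow identities with $\dens\le\dense$ give $F_l\le0$ (for $l\in\Ramps$, $F_l=d_l-\fluxout_l(\dense_l)=0$; for $l\in\OLinks$, $F_l=\flowin_l(\dens)-\fluxout_l(\dense_l)\le\flowin_l(\dense)-\fluxout_l(\dense_l)=0$), which is exactly $F(0)\ge0$ and $F(\dense)=0$. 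I would also record that $\dense$ is the \emph{unique} equilibrium in $\Ball$: any equilibrium there is in freeflow and hence solves the freeflow equilibrium equations, whose solution is unique by Proposition \ref{prop:freeflow} and Corollary \ref{cor:freeflow_equil}. To close, since $F(0)\ge0$, the standard cooperative-systems fact (\cite{Hirsch:1985fk,Angeli:2003fv}) yields that $t\mapsto\dens^0(t)\triangleq\phi_t(0)$ is nondecreasing; it is bounded above by $\dense$ and therefore converges to an equilibrium in $\Ball$, which must be $\dense$. For an arbitrary $\dens(0)$ with $0\le\dens_l(0)\le\dense_l$, order preservation of the cooperative flow gives $\dens^0(t)\le\dens(t)\le\dense$ for all $t\ge0$, and since both bounds tend to $\dense$, the squeeze yields $\dens(t)\to\dense$ componentwise, which is \eqref{eq:80}.

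The step I expect to be the main obstacle is establishing the freeflow property $\alpha^v\equiv1$ on $\Ball$ cleanly, since it is the hinge on which everything else turns: it is what makes the restricted dynamics both smooth and cooperative and simultaneously yields the boundary sign conditions for invariance. The remaining delicate bookkeeping is ensuring the monotone-convergence conclusion pins the limit of the bottom trajectory to $\dense$ rather than some other equilibrium, which is precisely what the uniqueness-of-equilibrium-in-$\Ball$ observation provides, and confirming that working with the globally defined freeflow field (rather than the genuinely piecewise-differentiable $F$) is legitimate because the two agree on the invariant set $\Ball$.
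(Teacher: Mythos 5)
Your proposal is correct and follows essentially the same route as the paper: identify the box $\{\dens:0\le\dens\le\dense\}$ as a freeflow (hence cooperative and invariant) region, show the trajectory from the bottom corner increases monotonically to the unique equilibrium $\dense$, and sandwich an arbitrary trajectory between it and the constant trajectory at $\dense$. The only (harmless) difference is that you invoke the standard cooperative-systems fact that $F(0)\ge 0$ makes $\phi_t(0)$ nondecreasing, whereas the paper establishes the same monotonicity by an explicit scalar comparison argument propagated inductively through the acyclic network; your version also spells out the verification that $\alpha^v\equiv 1$ on the box, which the paper merely asserts.
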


\begin{proof}

Let $\mathcal{A}\triangleq\{\dens:0\leq \dens_l\leq \dense_l \ \forall l\in\Links\}$ and note that $\alpha^v(\dens)=1$ for all $\dens\in\mathcal{A}$, $v\in\Verts$ since $\mathcal{A}$ is contained within the freeflow region. In particular, $\flowout_l(\dens)=\fluxout_l(\dens_l)$ for all $l\in\Links$ for $\dens\in\mathcal{A}$.

We now consider the general scalar system $\dot{x}=s(t)-g(x)$, $x(t)\in\mathbb{R}$ with $s(\cdot),g(\cdot)$ differentiable and monotone increasing in $t$ and $x$, respectively. We claim that if $\dot{x}(0)\geq 0$, then $\dot{x}(t)\geq 0$ for all $t$. Indeed, suppose $\dot{x}(\tau)=0$ for some $\tau\geq 0$. Then $\left.\frac{d}{dt}\dot{x}\right|_{t=\tau}=\left.\big(\dot{s}(t)+g'(x)\dot{x}\big)\right|_{t=\tau}=\dot{s}(\tau)\geq 0$.

Now consider $l\in \Lstart$. It follows that if $\dens_l(0)=0$, then $\dot{\dens}_l= d_l-\fluxout_l(\dens_l)$ and $\dot{\dens}_l(0)\geq 0$. From the above analysis, we have $\dot{\dens}_l\geq 0$ for all $t\geq 0$. Furthermore, $\flowout_l(\dens(t))=\fluxout_l(\dens_l(t))$ is monotonically increasing as a function of $t$. Since $\dot{\dens}_l=\flowin_l(\dens)-\fluxout_l(\dens_l)$ where $\flowin_l(\dens)=\sum_{k\in\Lin_{\tail(l)}}\turn^{\tail(l)}_{kl}\fluxout_k(\dens_k)$, we proceed inductively to conclude that $\flowin_l(\dens(t))$ is monotonically increasing in time and $\dot{\dens}_l\geq 0$ for all $l\in\Links$.%

Finally, observe that $\frac{\partial F_l}{\partial \dens_k}(\dens)\geq 0$ for all $k\neq l$ and $\dens\in\mathcal{A}$. Therefore, $\dot{\dens}=F(\dens)$ is \emph{cooperative} \cite{Hirsch:1985fk} within $\mathcal{A}$. For such systems, if $\tilde{\dens}(0)\leq \dens(0)\leq \bar{\dens}(0)$ then $\tilde{\dens}(t)\leq \dens(t)\leq \bar{\dens}(t)$ for all $t\geq 0$ where $\tilde{\dens}(t),\dens(t)$, and $\bar{\dens}(t)$ are trajectories of $\dot{\dens}=F(\dens)$. Taking $\tilde{\dens}(0)=0$ and $\bar{\dens}(t)\equiv \dense$, we have that $\tilde{\dens}(t)$ is monotonically increasing in time and bounded above by $\dense$ and therefore converges to an equilibrium. Corollary \ref{cor:freeflow_equil} implies $\dense$ is the unique equilibrium in $\mathcal{A}$, thus $\lim_{t\to\infty}\tilde{\dens}(t)=\dense$, concluding the proof.
\end{proof}

\subsection{Infeasible input flows}
We now wish to extend a notion of equilibrium to the case when the input flow is infeasible. We have already seen that the density of an ordinary link $\l\in\OLinks$ will not exceed the jam density $\densjam_l$ for any input flow. Thus any density accumulation due to the infeasible input flow must occur on the onramps $\Ramps$. It is therefore reasonable to consider an equilibrium condition in which the densities, input flows, and output flows on the ordinary links, and the output flows on onramp links, approach a steady state while onramp densities may grow without bound.
\begin{definition}
\label{def:equil2}
  For any input flow $\{\inflow_l\}_{l\in\Ramps}$, the collection $\{\flowe_l\}_{l\in\Links}$  is called an \emph{equilibrium flow} of the traffic network system if there exists a set $\{\dense_l\}_{l\in\Links}$ with 
  \begin{alignat}{4}
    \label{eq:23}
0&\leq \dense_l\leq \densjam_l &\ &\forall l\in\OLinks, \quad \text{and} \quad 0&\leq \dense_l\leq \infty &\ &\forall l\in\Ramps  
  \end{alignat}
such that
    \begin{alignat}{4}
      \label{eq:5}
      \flowe_l&=\flowout_l(\dense)=\flowin_l(\dense)&\ &\forall l\in\OLinks    \quad \text{and} \quad  \flowe_l&=\flowout_l(\dense)& \ &\forall l\in\Ramps
    \end{alignat}
and, for all $l\in\Ramps$, either $\flowe_l=\inflow_l$, or $\flowe_l< \inflow_l\text{ and }\dense_l=\infty$ where $\{\flowout_l(\dense)\}_{l\in\OLinks}$, $\{\flowin_l(\dense)\}_{l\in\OLinks}$, and $\{\flowout_l(\dense)\}_{l\in\Ramps}$ are determined by the PP/FIFO rule and we interpret $\fluxout_l(\infty)\triangleq \fluxoutmax_l$ for all $l\in\Ramps$. By a slight abuse of nomenclature, we call $\{\dense_l\}_{l\in\Links}$ an \emph{equilibrium density}.
\end{definition}

Definition \ref{def:equil2} naturally extends the definition for equilibrium flow given in Definition \ref{def:equil1} to the case when the input flow is infeasible. %

\begin{prop}
\label{prop:exist}
For constant input flows $\{d_l\}_{l\in\Ramps}$, an equilibrium flow exists.  
\end{prop}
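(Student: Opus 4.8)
The plan is to recast the existence question as a fixed-point problem and apply Brouwer's theorem on a compact convex set, the key difficulty being that onramp densities may diverge, so the natural state space $\Domain$ is non-compact. To circumvent this, I would reparametrize each onramp by its \emph{demand} rather than its density. By A3, $\fluxout_l$ maps $[0,\infty)$ continuously and increasingly into $[0,\fluxoutmax_l)$ for $l\in\Ramps$, so every demand level $\delta_l\in[0,\fluxoutmax_l)$ is realized by some finite density, while $\delta_l=\fluxoutmax_l$ encodes $\dense_l=\infty$ (matching the convention $\fluxout_l(\infty)\triangleq\fluxoutmax_l$ of Definition \ref{def:equil2}). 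This lets me work on the compact convex set $X\triangleq\prod_{l\in\OLinks}[0,\densjam_l]\times\prod_{l\in\Ramps}[0,\fluxoutmax_l]$, a point of which is $x=(\{\dens_l\}_{l\in\OLinks},\{\delta_l\}_{l\in\Ramps})$.

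Given $x\in X$, I would evaluate the PP/FIFO rule using $\fluxout_l(\dens_l)$ as the demand of each ordinary link and $\delta_l$ as the demand of each onramp: this produces scalings $\alpha^v(x)$ from \eqref{eq:11-2}--\eqref{eq:11-3}, outflows $\flowout_l(x)$ from \eqref{eq:11}, and inflows $\flowin_l(x)$ from \eqref{eq:9}. I would then define $T:X\to X$ by one projected balance step: for $l\in\OLinks$, set the new density to $\bigl[\dens_l+\flowin_l(x)-\flowout_l(x)\bigr]$ clamped to $[0,\densjam_l]$, and for $l\in\Ramps$, set the new demand to $\bigl[\delta_l+\inflow_l-\flowout_l(x)\bigr]$ clamped to $[0,\fluxoutmax_l]$. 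Because the flows produced by the PP/FIFO rule are continuous in the underlying demands and supplies (as already noted after Lemma \ref{lem:simp}, where $F$ is a continuous selection of differentiable functions) and the clamps are continuous, $T$ is a continuous self-map of the compact convex set $X$, so Brouwer's fixed-point theorem yields $x^\star=T(x^\star)$.

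It remains to check that every fixed point is an equilibrium flow in the sense of Definition \ref{def:equil2}. For an ordinary link with $\dens_l\in(0,\densjam_l)$ the clamp is inactive, forcing $\flowin_l(x^\star)=\flowout_l(x^\star)$; the boundary cases $\dens_l\in\{0,\densjam_l\}$ are handled by noting that $\flowin_l\le\fluxin_l(\dens_l)$ and $\flowout_l\le\fluxout_l(\dens_l)$ (Lemma \ref{lem:simp}) force both flows to the common value $0$ there as well, so \eqref{eq:5} holds on $\OLinks$. For an onramp, $\delta_l\in(0,\fluxoutmax_l)$ forces $\flowout_l(x^\star)=\inflow_l$ at a finite density (the equality case of Definition \ref{def:equil2}), while $\delta_l=\fluxoutmax_l$ forces only $\flowout_l(x^\star)\le\inflow_l$ with $\dense_l=\infty$ (the saturated case). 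Setting $\flowe_l\triangleq\flowout_l(x^\star)$ then gives the desired equilibrium flow.

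The main obstacle I anticipate is not Brouwer itself but the two continuity and bookkeeping points it rests on. First, verifying that $\alpha^v(x)$, which is a capped minimum of ratios of the form $\fluxin_k(\dens_k)/\sum_{j}\turn_{jk}(\text{demand}_j)$, extends continuously across the configurations where a denominator vanishes (no demand for some outgoing link, making that constraint inactive) and across the onramp boundary $\delta_l=\fluxoutmax_l\leftrightarrow\dense_l=\infty$. Second, the careful case analysis of fixed points lying on $\partial X$, including degenerate gridlock configurations at $\dens_l=\densjam_l$ and the distinction between \eqref{eq:2} and \eqref{eq:2-2} governing whether $\fluxoutmax_l$ is attained at a finite density. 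The feasible case is of course already covered by Proposition \ref{prop:freeflow}, but the fixed-point argument has the advantage of treating feasible and infeasible input flows uniformly.
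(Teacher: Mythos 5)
Your argument is correct and it reaches the same destination as the paper's proof, but by a genuinely different mechanism, so a comparison is worthwhile. The paper also compactifies the onramp coordinates --- via the chart $t(x)=x/(1+x)$ rather than via the demand function --- extends the vector field by uniform continuity to the compact convex closure $\hDomain$, observes positive invariance, and then invokes a standard stationary-point theorem for continuous vector fields on compact, convex, positively invariant sets; the stationary point is translated back into an equilibrium density with $\dense_l=\infty$ exactly where $\hdens_l=1$. You instead apply Brouwer directly to a projected Euler step on the box $X$, which replaces the invariance/stationary-point machinery with an explicit boundary analysis of the clamped fixed points. That analysis does go through: at $\dens_l=\densjam_l$ the bound $\flowin_l(\dens)\leq\fluxin_l(\densjam_l)=0$ from Lemma \ref{lem:simp} forces both flows to zero, at $\dens_l=0$ the bound $\flowout_l(\dens)\leq\fluxout_l(0)=0$ does the same, and at $\delta_l\in\{0,\fluxoutmax_l\}$ the clamp inequalities reproduce exactly the two alternatives of Definition \ref{def:equil2}. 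The continuity point you flag --- that $\alpha^v$ itself may jump as total demand at a junction vanishes while the product $\alpha^v(\cdot)\,\fluxout_l(\cdot)$ remains continuous, with Assumption \ref{assum:posturn} guaranteeing the denominators in \eqref{eq:26} are positive whenever some incoming demand is --- is the same burden the paper carries when it asserts Lipschitz continuity of $\hat{F}$ on $\hDomain'$, so you are not worse off there. What your route buys is self-containedness (plain Brouwer on a static map, no appeal to an invariant-set theorem) and a transparent encoding of $\dense_l=\infty$ as $\delta_l=\fluxoutmax_l$; what the paper's route buys is reuse, since the same compactified vector field $\bF$ and domain $\hDomain$ constructed in its proof are needed verbatim in the Lyapunov/LaSalle argument of Lemma \ref{lem:main}.
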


The proof is provided in the appendix.

We now consider the uniqueness of equilibrium flows. We first consider the case when the traffic network graph is a polytree:

\begin{definition}
  A \emph{polytree} is a directed acyclic graph with exactly one undirected path between any two vertices. 
\end{definition}
Equivalently, a polytree is a weakly connected directed acyclic graph for which the underlying undirected graph contains no cycles. Figs. \ref{fig:nonmonotone1}(a), \ref{fig:nonmonotone}(a), and \ref{fig:nonmonotone}(b) depict polytrees.%

\begin{prop}
\label{prop:unique2}
  Given constant infeasible input flow $\{\inflow_l\}_{l\in\Ramps}$. If the traffic network graph $\network$ is a polytree, then the equilibrium flow $\{\flowe_l\}_{l\in\Links}$ is unique.
\end{prop}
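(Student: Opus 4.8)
The plan is to induct on the number of junctions, exploiting the defining cut structure of a polytree: since the underlying undirected graph is a tree, for every link $k$ the set of links reachable from $\head(k)$ forms a subtree $T_k$ that is entered from the rest of the network \emph{only} through $k$ (a second entry point would close an undirected cycle), and the subtrees hanging off distinct outgoing links of a common junction are vertex-disjoint. This disjointness is precisely what is absent in the diamond-type configurations behind the non-cooperativity of Example \ref{ex:2}, and it is what prevents the circular supply/demand dependencies that otherwise permit multiple self-consistent congestion patterns.

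Before the induction I would record the global constraint from \eqref{eq:14}: at any equilibrium the ordinary-link flows satisfy $\floweO=\Routea\floweO+\Routeb\,\flowout_\Ramps(\dense)$, so $\floweO=(I-\Routea)^{-1}\Routeb\,\flowout_\Ramps(\dense)$ is a fixed nonnegative linear image of the onramp outflows. Hence it suffices to pin down the onramp outflows together with the scalar $\alpha^v$ at each junction. To organize the recursion I would \emph{strengthen} the induction hypothesis to also assert that each subtree $T_k$, viewed as a polytree fed through $k$ with its internal onramp inputs held fixed, has a well-defined \emph{downstream capacity} $C_k$, the maximal equilibrium flow $T_k$ can absorb through $k$. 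This capacity is a structural quantity even though the congested equilibrium \emph{density} need not be unique: what the parent junction needs is only the flow level at which $k$ saturates, not the density realizing it.

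For the inductive step I would peel the topologically-first junction $v_1$, whose incoming links are necessarily all onramps, and whose outgoing links $k_1,\dots,k_m$ lead to the disjoint subtrees $T_{k_1},\dots,T_{k_m}$ with capacities $C_{k_i}$ supplied by the hypothesis. At equilibrium each onramp $l\in\Lin_{v_1}$ carries $\flowe_l=\min\{d_l,\ \alpha^{v_1}\fluxoutmax_l\}$, the two clauses being the unsaturated case $\dense_l<\infty$ and the saturated case $\dense_l=\infty$; thus the flow offered to out-link $k_i$ is $G_i(\alpha^{v_1})\triangleq\sum_{l}\turn_{lk_i}\min\{d_l,\alpha^{v_1}\fluxoutmax_l\}$, continuous and nondecreasing in $\alpha^{v_1}$ and running from $0$ up to $\sum_l\turn_{lk_i}d_l$. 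The PP/FIFO rule selects the largest $\alpha^{v_1}\in[0,1]$ with $G_i(\alpha^{v_1})\le C_{k_i}$ for all $i$; by monotonicity each constraint cuts out an interval $[0,\alpha_i^*]$ with a unique crossing, so $\alpha^{v_1}=\min\{1,\min_i\alpha_i^*\}$, and therefore every $\flowe_l$ and every out-flow $\flowe_{k_i}$, is uniquely determined. Feeding the determined $\flowe_{k_i}$ (which by construction respects $C_{k_i}$) into $T_{k_i}$ as a fixed root input, the induction hypothesis yields a unique equilibrium flow on each subtree, and together with uniqueness at $v_1$ this gives uniqueness on all of $\network$.

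The main obstacle is the bookkeeping hidden in the strengthened hypothesis: establishing that $C_k$ is genuinely well-defined and computable by the same recursion, so that the interface a subtree presents to its parent is a single number rather than a flow-dependent object, and verifying the single-junction uniqueness uniformly across the cases where onramps saturate ($\dense_l=\infty$), where an out-link is in freeflow versus congested, where a fed ordinary root link contributes a determined constant flow in place of the $\min\{d_l,\alpha^{v_1}\fluxoutmax_l\}$ term, and where several $G_i$ bind simultaneously. All of these reductions rely on the polytree disjointness: it is what guarantees $C_{k_i}$ does not itself vary with the flow routed into a sibling out-link $k_j$, the very feedback that would otherwise destroy uniqueness.
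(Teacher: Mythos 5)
Your route is genuinely different from the paper's: the paper argues by contradiction, comparing two putative equilibria and tracing alternating chains of links where supply binds, using the absence of undirected cycles to force the chains onto ever-new links until finiteness is violated. You instead attempt a constructive structural induction in which each subtree is summarized by a single scalar ``capacity'' $C_k$. The difficulty is that this summary is exactly the statement you are supposed to prove, and you have filed it under ``bookkeeping.'' Concretely: the PP/FIFO maximality condition at $v_1$ is that $\alpha^{v_1}$ is the largest $\alpha$ with $\alpha\sum_{l}\turn_{lk_i}\fluxout_l(\dense_l)\le \fluxin_{k_i}(\dense_{k_i})$ for all $i$, where the supplies are evaluated at the \emph{unknown equilibrium densities}. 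To replace $\fluxin_{k_i}(\dense_{k_i})$ by the constant $C_{k_i}$ you need the lemma: \emph{in every equilibrium, if the constraint for $k_i$ binds at $v_1$ (equivalently, $\fluxin_{k_i}(\dense_{k_i})=\flowe_{k_i}$), then $\flowe_{k_i}=C_{k_i}$}; equivalently, a subtree carrying a root inflow $f<C_{k_i}$ can never present a root-link supply equal to $f$. If this fails, a second, more congested equilibrium with a strictly smaller $\alpha^{v_1}$ --- hence different onramp outflows and different flows on the \emph{sibling} subtrees --- is not excluded, and these spurious self-sustaining congestion patterns are precisely the alternative equilibria the proposition rules out. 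Ruling them out requires the observation that a congested link forces a binding supply constraint downstream, propagating a chain that must terminate at a link carrying exactly its critical flow (this is the ``claim'' inside the paper's Proposition~\ref{prop:strictfeas1} and the engine of the paper's Proposition~\ref{prop:unique2} proof); relating that terminal critical-flow condition back to ``the root inflow equals $C_{k_i}$'' is nontrivial because internal onramps compete with the through-flow under proportional priority, so internal link flows are not a fixed linear image of the root inflow alone.

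A second, related gap sits in the recursion itself: when you descend into $T_{k_i}$ with a ``fixed root input'' $\flowe_{k_i}$, the quantity that enters the PP/FIFO rule at $\head(k_i)$ is not the flow $\flowe_{k_i}$ but the demand $\fluxout_{k_i}(\dense_{k_i})$, which is a free variable (any density with $\alpha^{\head(k_i)}\fluxout_{k_i}(\dense_{k_i})=\flowe_{k_i}$ is a candidate) and which, through proportional priority, determines how much the sibling onramps at $\head(k_i)$ are allowed to discharge. So the interface between a parent and a subtree is not a single number in either direction until the ``congested only at capacity'' lemma is proved at every level. I believe your induction can be completed --- the required lemma should be added to the induction hypothesis and proved by essentially the paper's chain argument restricted to a subtree, with polytree disjointness guaranteeing, as you say, that $C_{k_i}$ does not depend on sibling flows --- and the completed version would have the bonus of being an algorithm for computing the equilibrium. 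But as written the core step is assumed rather than proved, whereas the well-definedness of $C_k$ and the case analysis you list are comparatively routine.
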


\begin{proof}[Proof of Proposition \ref{prop:unique2}]

  Suppose there exists two equilibrium flows, $\{\flowe_l\}_{l\in\Links}$ and $\{\flowealt_l\}_{l\in\Links}$ with corresponding equilibrium densities $\{\dense_l\}_{l\in\Links}$ and $\{\densealt_l\}_{l\in\Links}$, and without loss of generality, assume $\flowealt_l<\flowe_l$ for a particular link $l$. If $l\in\OLinks$, by conservation of flow and fixed turn ratios \eqref{eq:9}, there must exist $k\in\Lin_{\tail(l)}$ such that $\flowealt_k<\flowe_k$. Continuing by induction, we conclude that there must exist $j\in\Ramps$ such that $\flowealt_j<\flowe_j$, thus we assume, without loss of generality, $l\in\Ramps$.

Observe that, since $l$ is an onramp and therefore $\flowe_l\leq \inflow_l$, we must have $\flowealt_l<\inflow_l$ and thus $\fluxout_l(\densealt_l)>\flowealt_l$. Furthermore, $\fluxout_l(\densealt_l)\geq \fluxout_l(\dense_l)$.

Let $l_1\triangleq l$. It is the case that for any link $l_i\in\Links$ for which $\fluxout_{l_i}(\densealt_{l_i})>\flowealt_{l_i}$, there must exist $l_{i+1}\in\Lout_{\head(l_i)}$ such that 
$\flowealt_{l_{i+1}}=\fluxin_{l_{i+1}}(\densealt_{l_{i+1}}),$ 
\emph{i.e.} there must exist an outgoing link with insufficient supply to meet the demand. 

Suppose, in this case, that $\flowealt_{l_{i+1}}<\flowe_{l_{i+1}}$. Since $\flowe_{l_{i+1}}\leq \fluxin_{l_{i+1}}(\dense_{l_{i+1}})$ by \eqref{eq:24}, we conclude $\fluxin_{l_{i+1}}(\densealt_{l_{i+1}})<\fluxin_{l_{i+1}}(\dense_{l_{i+1}})$ and since $\fluxin_{l_{i+1}}(\cdot)$ is decreasing, we must have $\densealt_{l_{i+1}}>\dense_{l_{i+1}}$. Therefore $\fluxout(\densealt_{l_{i+1}})>\fluxout(\dense_{l_{i+1}})$. Furthermore it must be $\fluxout_{l_{i+1}}(\densealt_{l_{i+1}})>\flowealt_{l_{i+1}}$, and we conclude there exists $l_{i+2}\in\Lout_{\head(l_{i+1})}$ such that $\flowealt_{l_{i+2}}=\fluxin_{l_{i+2}}(\densealt_{l_{i+2}})$. Continuing by induction, we create a sequence $l_1,l_2,\ldots,l_{i_0}$ until we reach link $l_{i_0}$ with $\flowealt_{l_{i_0}}<\flowe_{l_{i_0}}$ and $\fluxout_{l_{i_0}} (\densealt_{l_{i_0}})\geq \fluxout_{l_{i_0}} (\dense_{l_{i_0}})$ for which there exists $l_{i_0+1}\in\Lout_{\head(l_{i_0})}$ such that $\flowealt_{l_{i_0+1}}=\fluxin_{l_{i_0+1}}(\densealt_{l_{i_0+1}})$ but $\flowealt_{l_{i_0+1}}\geq\flowe_{l_{i_0+1}}$(Such a $i_0$ must exist since the traffic network contains no directed cycles). 

Now, due to fixed turn ratios, there must exist $l'\in\Lin_{\head(l_{i_0})}$ such that $\flowealt_{l'}>\flowe_{l'}$ and, by the PP/FIFO rule, 
\begin{align}
  \label{eq:6}
\fluxout_{l'}(\densealt_{l'})>\fluxout_{l'}(\dense_{l'}).  
\end{align}
Arguing as before, we thus conclude there exists $k_1\in\Ramps$ such that $\flowealt_{k_1}>\flowe_{k_1}$ for which $\fluxout_{k_1}(\dense_{k_1})\geq \fluxout_{k_1}(\densealt_{k_1})$. By a symmetric argument as above, we establish a sequence $k_1,\ldots,k_{j_0}$ until we reach link $k_{j_0}$ with $\flowealt_{k_{j_0}}>\flowe_{k_{j_0}}$ and $\fluxout_{k_{j_0}}(\dense_{k_{j_0}})\geq \fluxout_{k_{j_0}}(\densealt_{k_{j_0}})$ for which there exists $k_{j_0+1}\in\Lout_{\head(k_{j_0})}$ such that $\flowe_{k_{j_0+1}}=\fluxin_{k_{j_0+1}}(\dense_{k_{j_0+1}})$ but $\flowe_{k_{j_0+1}}\geq\flowealt_{k_{j_0+1}}$,  leading to the existence of $k'\in\Lin_{\head(k_{j_0})}$ such that $\flowe_{k'}>\flowealt_{k'}$ and $\fluxout_{k'}(\dense_{k'})>\flowout_{k'}(\densealt_{k'})$. Observe that we must have $k_{j_0}\neq l'$ by \eqref{eq:6}. 

 By another parallel argument, this then implies the existence of a link $m$ for which $\flowealt_m<\flowe_m$ and $\fluxout_m(\densealt_m)\geq \fluxout_m(\dense_m)$ for which there exists $m_1\in\Lout_{\head(m)}$ such that $\flowealt_{m_1}=\fluxin_{m_1}(\densealt_{m_1})$ and $\flowealt_{m_1}\geq \flowe_{m_1}$, and likewise we have $m\neq k'$. Furthermore, $m_1\neq l_{i_0}$, as this would imply a cycle in the underlying undirected graph. However, this process cannot continue indefinitely since the traffic network is finite.

\end{proof}
If the undirected traffic network does contain cycles, then equilibrium flows may not be unique when the input flow is infeasible. Such examples with nonunique equilibrium flows are not difficult to construct.%

In \cite{Coogan:2014ph}, we show that networks consisting of only merging junctions are cooperative. For example, a length of highway with onramps as in Fig. \ref{fig:nonmonotone}(a) is cooperative. Furthermore, we prove global convergence of flows under a mild restriction on $\beta_{lk}$ for each $l$, namely, that each incoming link $l$ at a particular junction routes the same fraction of flow off the network.  We include this result in the appendix for completeness.

\section{Ramp Metering}
\label{sec:ramp-metering}
Ramp metering is an active and rich area of research; see \cite{Papageorgiou:2000ys} for a review of approaches to ramp metering. In this section, we leverage the results on equilibria and convergence established above to design ramp metering strategies that achieve the maximum possible network throughput.
 \begin{definition}
   A \emph{ramp metering strategy} is a collection of functions $\{\meter_l(t)\}_{l\in\Ramps}$, $m_l(\cdot):\mathbb{R}_{\geq 0}\to\mathbb{R}_{\geq 0}$ that modifies the demand function of onramps. In particular, we introduce the \emph{metered demand function}
   \begin{equation}
     \label{eq:49}
     \fluxoutm_l(\dens_l(t))\triangleq \min\{\fluxout_l(\dens_l(t)),\meter_l(t)\} \quad \forall l\in\Ramps.
   \end{equation}
The traffic network dynamics are exactly as above with the metered demand function $\fluxoutm_l(\dens_l)$ replacing the demand function $\fluxout_l(\dens_l)$ for all $l\in\Ramps$.
 \end{definition}

In the following, we assume constant metering strategies, \emph{i.e.}, $\meter_l(t)\triangleq \meter_l$, and thus the network flow model remains autonomous for constant input flows. We then have $\fluxoutmmax_l\triangleq \min\{\fluxoutmax_l,m_l\}$ is the maximum outflow of link $l\in\Ramps$. The metering objective we consider is {network throughput} at equilibrium where \emph{network throughput} of an equilibrium flow $\{\flowe_l\}_{l\in\Links}$ is defined to be $\sum_{l\in\Ramps}\flowe_l$. %

The main result of this section is Theorem \ref{thm:ramp} which states that there exists a ramp metering strategy, obtained via a linear program, that induces an equilibrium with optimal network throughput. This result follows from the following proposition, which states that the resulting equilibrium flow from any ramp metering strategy is induced by a suitable choice of a (potentially different) ramp metering strategy and the assumption that each link is in freeflow. %

\begin{prop}
  \label{prop:ramp}
Consider a constant ramp metering strategy $\{\meter_l\}_{l\in\Ramps}$ that induces an equilibrium flow $\{\flowe_l\}_{l\in\Links}$ with a corresponding equilibrium density $\{\dense_l\}_{l\in\Links}$. Then there exists another constant ramp metering strategy $\{\meteralt_l\}_{l\in\Ramps}$ with the same equilibrium flow $\{\flowe_l\}_{l\in\Links}$ and new equilibrium density $\{\densealt_l\}_{l\in\Links}$ such that $\densealt_l\leq \densc_l$ for all  $l\in\OLinks$.

\end{prop}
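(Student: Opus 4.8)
The plan is to realize the given equilibrium flow $\{\flowe_l\}$ with every ordinary link relaxed into freeflow, throttling each onramp just enough to hold its outflow at $\flowe_l$. Concretely, I would set the new metering to $\meteralt_l\triangleq\flowe_l$ for every $l\in\Ramps$, keep the onramp densities unchanged ($\densealt_l\triangleq\dense_l$ for $l\in\Ramps$), and on each ordinary link define $\densealt_l$ to be the unique freeflow density with $\fluxout_l(\densealt_l)=\flowe_l$. Such a density exists and lies in $[0,\densc_l]$: by Proposition \ref{prop:lessthancrit} we have $\flowe_l\leq\fluxc_l=\fluxout_l(\densc_l)$ for all $l\in\OLinks$, so since $\fluxout_l$ is continuous, strictly increasing, and $\fluxout_l(0)=0$, the intermediate value theorem supplies a unique $\densealt_l\in[0,\densc_l]$, immediately giving the desired bound $\densealt_l\leq\densc_l$. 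This mirrors the freeflow construction in the ``if'' direction of Proposition \ref{prop:freeflow} and Corollary \ref{cor:freeflow_equil}, but now with the metered demand and with onramp densities allowed to be infinite.

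Next I would check that the metered onramp outflows are unchanged. In the original equilibrium the onramp outflow is $\flowe_l=\alpha^{\head(l)}(\dense)\,\fluxoutm_l(\dense_l)\leq\fluxoutm_l(\dense_l)\leq\fluxout_l(\dense_l)$, so $\fluxout_l(\dense_l)\geq\flowe_l$ (interpreting $\fluxout_l(\infty)=\fluxoutmax_l$ in the saturated case, where $\flowe_l\leq\fluxoutmmax_l\leq\fluxoutmax_l$). Hence under the new metering the onramp demand is $\min\{\fluxout_l(\densealt_l),\meteralt_l\}=\min\{\fluxout_l(\dense_l),\flowe_l\}=\flowe_l$. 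This is exactly the point of re-metering: where the original equilibrium had $\alpha^{\head(l)}(\dense)<1$ because downstream congestion limited the junction, the new freeflow densities restore supply, so the onramp must be deliberately capped at $\flowe_l$ to preserve the flow.

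The crux is then to verify that $\{\densealt_l\}$ together with $\{\meteralt_l\}$ satisfies the PP/FIFO rule with $\alpha^v(\densealt)=1$ at every junction. For $k\in\Lout_v$, the full (metered) demand routed toward $k$ equals $\sum_{j\in\Lin_v}\turn_{jk}\fluxout_j(\densealt_j)=\flowe_k$, since each incoming link---ordinary or onramp---now presents demand exactly $\flowe_j$ and the equilibrium flow values satisfy conservation \eqref{eq:9}. Because $k$ is in freeflow, $\fluxin_k(\densealt_k)\geq\fluxin_k(\densc_k)=\fluxc_k\geq\flowe_k$, the last inequality again by Proposition \ref{prop:lessthancrit}; thus the supply constraint \eqref{eq:11-3} holds at $\alpha=1$, so $\alpha^v(\densealt)=1$ is feasible and hence optimal. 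With $\alpha^v\equiv1$, every ordinary link then satisfies $\flowout_l(\densealt)=\fluxout_l(\densealt_l)=\flowe_l=\flowin_l(\densealt)$, and every onramp satisfies $\flowout_l(\densealt)=\flowe_l$ with the same relation to $\inflow_l$ (either $\flowe_l=\inflow_l$, or $\flowe_l<\inflow_l$ with $\densealt_l=\dense_l=\infty$) that held originally, so $\{\flowe_l\}$ is an equilibrium flow for $\{\meteralt_l\}$ with density $\{\densealt_l\}$ in the sense of Definition \ref{def:equil2}.

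I expect the only real work to be this $\alpha^v=1$ verification, i.e.\ confirming that downstream supply stays slack on every outgoing link; it reduces entirely to combining freeflow ($\fluxin_k\geq\fluxc_k$) with the subcriticality bound $\flowe_k\leq\fluxc_k$ of Proposition \ref{prop:lessthancrit}. The bookkeeping at sinks (no $\alpha$-scaling and no supply constraint) and in the saturated onramp case ($\densealt_l=\infty$) is routine.
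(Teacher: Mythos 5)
Your proposal is correct and follows essentially the same construction as the paper's proof: set $\meteralt_l=\flowe_l$, place each ordinary link at the unique freeflow density with $\fluxout_l(\densealt_l)=\flowe_l$ (which lies below $\densc_l$ by Proposition \ref{prop:lessthancrit}), and verify $\alpha^v(\densealt)=1$ everywhere. The only cosmetic difference is your choice of onramp densities (you keep $\densealt_l=\dense_l$ where the paper uses $\infty$ except in the unsaturated case), and you spell out the $\alpha^v=1$ verification that the paper leaves as ``easy to verify.''
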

\begin{proof}
  We construct such an alternative metering strategy explicitly.  For each $l\in\Ramps$, define $\meteralt_l\triangleq \flowe_l$ (if $\flowe_l=\inflow_l$, then we can in fact choose any $\meteralt_l\geq d_l$). If $d_l<\fluxoutmax_l$ and $\flowe_l=d_l$, let $\densealt_l$ be such that $\fluxout_l(\densealt_l)=\flowe_l$.  Otherwise, let $\densealt_l=\infty$.

For $l\in\OLinks$, let $\densealt_l\in[0,\densc_l]$ be such that $\fluxout_l(\densealt_l)=\flowe_l$ (such $\densealt_l$ always exists since $\flowe_l\leq \fluxc_l$ by Proposition \ref{prop:lessthancrit}). Observe that $\densealt_l\leq \dense_l$, and thus $\fluxin_l(\densealt_l)\geq \fluxin_l(\dense_l)$. It is easy to verify that $\{\flowe_l\}_{l\in\Links}$ satisfies the definition of an equilibrium flow for the metered networked flow system where $\alpha^v(\densealt)= 1$ for all $v\in\Verts$.
\end{proof}

Given infeasible demand $\{d_l\}_{l\in\Ramps}$, consider the following linear program:
\begin{alignat}{3}
  \label{eq:52}
  \max_{\{s_l\}_{l\in\Ramps},\{\flowe_l\}_{l\in\OLinks}}&\qquad&\sum_{l\in\Ramps}&s_l\\
\label{eq:52-2}\textnormal{subject to}&&\flowe_\OLinks&=\Routea\flowe_\OLinks+\Routeb s\\
&&0&\leq s_l\leq \min\{d_l,\fluxoutmax_l\}& \quad&\forall l\in\Ramps\\
\label{eq:52-mid}&&0&\leq \flowe_l\leq \fluxc_l&& \forall l\in\OLinks%
\end{alignat}
where $s=\begin{bmatrix}s_1&\ldots&s_{|\Ramps|}\end{bmatrix}'$. The feasible set \eqref{eq:52-2}--\eqref{eq:52-mid} is compact and thus the convex program attains its maximum. From a solution to \eqref{eq:52}--\eqref{eq:52-mid} we construct an optimal metering strategy:
      \begin{thm}
        \label{thm:ramp}
Let $\{s^\star_l\}_{l\in\Ramps}, \{{\flowe_l}^\star\}_{l\in\OLinks}$ be a maximizer of \eqref{eq:52}--\eqref{eq:52-mid}. Then any metering strategy $\{\meter_l\}_{l\in\Ramps}$ satisfying $m_l=s_l^\star$ if $s_l^\star<d_l$ and $m_l\geq s_l^\star$ if $s_l^*=d_l$
induces the equilibrium flow $\{\flowe_l\}_{l\in\Links}$ given by $\flowe_l={\flowe_l}^\star$ for all $l\in\OLinks$ and $\flowe_l=s_l^\star$ for all $l\in \Ramps$. Furthermore, $\{\flowe_l\}_{l\in\Links}$ achieves the maximum possible network throughput.
      \end{thm}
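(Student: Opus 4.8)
The plan is to prove the two assertions separately: first, that the prescribed metering strategy induces the stated equilibrium flow, and second, that no metering strategy can achieve a larger throughput. The first assertion is essentially a metered version of the ``if'' direction of Proposition~\ref{prop:freeflow} together with Proposition~\ref{prop:ramp}: I would exhibit equilibrium densities that place every ordinary link in freeflow, so that the supply constraints never bind and $\alpha^v(\dense)=1$ everywhere. The second assertion is the conceptual heart of the theorem and rests on a single observation: the equilibrium flow induced by \emph{any} constant metering strategy is a feasible point of the linear program \eqref{eq:52}--\eqref{eq:52-mid}, so the optimal value $\sum_{l\in\Ramps}s_l^\star$ upper bounds the throughput of every achievable equilibrium.

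For the first assertion, for each $l\in\OLinks$ I would choose $\dense_l\in[0,\densc_l]$ with $\fluxout_l(\dense_l)={\flowe_l}^\star$; such $\dense_l$ exists and is unique because $\fluxout_l$ is strictly increasing and ${\flowe_l}^\star\leq\fluxc_l$ by \eqref{eq:52-mid}. For each $l\in\Ramps$ I would set $\dense_l$ according to the two cases in the hypothesis: when $s_l^\star<d_l$ (so $m_l=s_l^\star\leq\fluxoutmax_l$) take $\dense_l=\infty$, giving metered outflow $\fluxoutmmax_l=\min\{\fluxoutmax_l,m_l\}=s_l^\star<d_l$; when $s_l^\star=d_l$ (so $m_l\geq d_l$) take $\dense_l$ finite with $\fluxout_l(\dense_l)=d_l$ (or $\dense_l=\infty$ in the boundary case of \eqref{eq:2}), giving metered outflow $\min\{d_l,m_l\}=d_l=s_l^\star$. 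In either case the onramp outflow equals $s_l^\star$ and the equilibrium condition of Definition~\ref{def:equil2} is met. It then remains to verify the PP/FIFO rule with $\alpha^v(\dense)=1$: since every ordinary link has $\dense_l\leq\densc_l$, monotonicity of $\fluxin_l$ gives $\fluxin_l(\dense_l)\geq\fluxc_l\geq{\flowe_l}^\star$, and because constraint \eqref{eq:52-2} is exactly the conservation relation \eqref{eq:9}, the demand arriving at each outgoing link equals its equilibrium inflow ${\flowe_k}^\star$ and hence does not exceed its supply. Thus $\alpha^v(\dense)=1$ for all $v$, every ordinary link is in freeflow, and \eqref{eq:5} holds, yielding the claimed equilibrium flow.

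For the optimality assertion, I would take an arbitrary constant metering strategy with \emph{any} induced equilibrium flow $\{\flowealt_l\}$ and densities $\{\densealt_l\}$, and show that $(\{\flowealt_l\}_{l\in\Ramps},\{\flowealt_l\}_{l\in\OLinks})$ is feasible for \eqref{eq:52}--\eqref{eq:52-mid}. The conservation relation \eqref{eq:14} at equilibrium gives $\flowealt_\OLinks=\Routea\flowealt_\OLinks+\Routeb\flowealt_\Ramps$, which is \eqref{eq:52-2}. For each $l\in\OLinks$, Proposition~\ref{prop:lessthancrit} gives $\flowealt_l\leq\fluxc_l$; crucially, its proof involves only the (unaltered) ordinary-link supply and demand functions, so it applies verbatim to the metered system. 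For each $l\in\Ramps$, the metered outflow never exceeds the metered demand, so $\flowealt_l\leq\fluxoutmmax_l\leq\fluxoutmax_l$, while the equilibrium condition of Definition~\ref{def:equil2} forces $\flowealt_l\leq d_l$; together these give $\flowealt_l\leq\min\{d_l,\fluxoutmax_l\}$. Hence the point is LP-feasible and its throughput $\sum_{l\in\Ramps}\flowealt_l$ is at most $\sum_{l\in\Ramps}s_l^\star$. Since the construction above achieves exactly $\sum_{l\in\Ramps}s_l^\star$, it attains the maximum possible network throughput.

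The main obstacle is the optimality half, and within it the point needing care is that Proposition~\ref{prop:lessthancrit}, and the freeflow bound ${\flowe_l}^\star\leq\fluxc_l$ underlying feasibility, must be justified for the metered dynamics rather than the original ones; this is immediate once one notes that metering modifies only onramp demand, but it is the step that makes the constraint \eqref{eq:52-mid} the exact characterization of achievable ordinary-link flows. Beyond this, the bookkeeping of the two onramp cases and the boundary behavior at $\densealt_l=\infty$ is routine.
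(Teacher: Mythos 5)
Your proposal is correct and takes essentially the same route as the paper: the paper's (two-sentence) proof invokes Proposition~\ref{prop:ramp} to restrict attention to freeflow equilibria and asserts that the LP maximizes throughput over these, which is exactly the achievability construction plus LP-feasibility argument you spell out. Your version is in fact more explicit than the paper's, in particular in verifying that Proposition~\ref{prop:lessthancrit} and the conservation relation make every achievable equilibrium LP-feasible, and in handling the two onramp cases in the construction.
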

      \begin{proof}
Proposition \ref{prop:ramp} demonstrates the sufficiency of only considering ramp metering strategies that induce an equilibrium for which each ordinary link $l\in\OLinks$ is in freeflow. The program \eqref{eq:52}--\eqref{eq:52-mid} maximizes throughput subject to this free flow.
      \end{proof}

         We remark that, as in Proposition \ref{prop:roc}, $\lim_{t\to\infty}\dens_l(t)=\densealt_l$ for all trajectories such that $\dens_l(0)\leq \densealt_l$ for all $l\in\Links$ where $\{\densealt\}_{l\in\Links}$ is the equilibrium density construction in Proposition \ref{prop:ramp}.

    \begin{figure}
    \centering
\begin{tabular}{c c}
    \begin{tikzpicture}[yscale=.7]
\node[junc] (j1) at (0,0) {$v_1$};
\draw[onramp] ($(j1)+(180:.8in)$) -- node[above]{$1$} (j1);
\node[junc] (j2) at ($(j1)+(30:.8in)$) {$v_2$};
\node[junc] (j3) at ($(j1)+(-30:.8in)$) {$v_3$};
\node[junc, draw=gray] (j4) at ($(j2)+(0:.8in)$) {\color{gray}$v_4$};
\draw[onramp,gray] ($(j2)+(150:.8in)$) -- node[above]{\color{gray}$4$} (j2);
\draw[link,gray] (j2) -- node[above]{\color{gray}$5$} (j4);
\draw[link] (j1) -- node[above]{$2$}(j2);
\draw[link] (j1) -- node[above]{$3$}(j3);
    \end{tikzpicture} \\
(a)\\
\begin{tikzpicture}
  \begin{axis}[width=.4\textwidth,axis x line=center, axis y line=center, xmin=0, xmax=5, ymax=3.5,ymin=0,tick align=outside,     xtick={1,4}, xticklabel shift=-2pt, unit vector ratio*=1 .9 1, 
    xticklabels={$90$,$360$}, ytick={2,2.666},yticklabels={$3000$,$4000$},
    xlabel=$\dens_l$,
    ylabel=$\flux_l$,
legend style={at={(1.05,1)}, anchor=north west}
]
    \addplot+[mark=none, black] coordinates{
(0,0)
(1,2)
(5,2)
};
\addlegendentry{$\fluxout_l(\dens_l)$}
\addplot+[mark=none,black,dashed]coordinates{
(0,2.66)
(4,0)
};
\addlegendentry{$\fluxin_l(\dens_l)$}
  \end{axis}
\end{tikzpicture}\\
(b)
\end{tabular}
    \caption{A network with two onramps $\{1,4\}$ and three ordinary links $\{2,3,5\}$. The dark links and nodes are enough to demonstrate that the system is not cooperative due to the FIFO junction assumption: increased density on link 2 could restrict the outflow from link 1, and due to fixed split ratios, the inflow to link 3 also decreases. The additional shaded links and node illustrate how ramp metering increases network throughput: by restricting outflow from onramp 4, the density on link 2 may be reduced, leading ultimately to increased flow on link 3. (b) The supply and demand functions for links $\{2,3,5\}$. }
    \label{fig:nonmonotone1}
  \end{figure}
      \begin{example}
\label{ex:1revisit}
Consider a road network with two onramp links labeled $\{1,4\}$ and three ordinary links labeled $\{2,3,5\}$ as shown in Fig. \ref{fig:nonmonotone1}(a). This network is not cooperative; in particular, $\partial F_3(\dens)/\partial \dens_2 <0$ when link $2$ does not have adequate supply, see \cite[pp. 14--15]{Kurzhanskiy:2011fk} for a similar example. We assume links $\{2,3,5\}$ each have the supply and demand functions as shown in Fig. \ref{fig:nonmonotone1}(b) and that $\fluxoutmax_1=3000$ and $\fluxoutmax_4=6000$ (units are vehicles per hour), and we assume $\beta_{12}=\beta_{13}=\frac{1}{2}$ and $\beta_{25}=\beta_{45}=1$. With input flows $d_1=d_4=2500$, it can be verified that the equilibrium with no ramp metering is
\begin{alignat}{1}
  \label{eq:15}
\{\flowe_1,\flowe_2,\flowe_3,\flowe_4,\flowe_5\}&=\{2000,1000,1000,2000,3000\} \\
  \{\dense_1,\dense_2,\dense_3,\dense_4,\dense_5\}&=\{\infty,270,30,\infty,90\}
\end{alignat}
and therefore the total network throughput is $\flowe_1+\flowe_4=4000$. 

Solving \eqref{eq:52}--\eqref{eq:52-mid} and applying Theorem \ref{thm:ramp}, we choose $m_1\geq 2500, m_4=1750$, and then
\begin{alignat}{6}
  \label{eq:79}
\{\flowe_1,\flowe_2,\flowe_3,\flowe_4,\flowe_5\}&=\{2500,1250,1250,1750,3000\} \\
  \{\dense_1,\dense_2,\dense_3,\dense_4,\dense_5\}&=\{\infty,37.5,37.5,\infty,90\}  
\end{alignat}
with network throughput $\flowe_1+\flowe_4=4250$.%
      \end{example}

\section{Conclusions}
\label{sec:conc}
  We have proposed and analyzed a macroscopic traffic flow model that merges ideas from compartmental system theory and dynamical system theory with existing, validated traffic network models. We apply these results to develop a ramp metering strategy that optimizes throughput. Future work will investigate time-varying ramp metering strategies for time-varying onramp input demands. We will also consider alternative control objectives such as minimizing travel time or equalizing ramp queue length and alternative control methods such as route suggestion.

\appendix
\section{Proofs and Auxiliary Results}

\subsection{Proof of Proposition \ref{prop:exist}}
\label{sec:prop:exist}
\begin{proof}[Proof of Proposition \ref{prop:exist}]
We introduce a change of coordinates which allows us to capture ``equilibrium'' conditions in which onramps have infinite density.

Let 
\begin{equation}
  \label{eq:104}
t(x)\triangleq \frac{x}{1+x}, \quad t^{-1}(\hat{x})=\frac{\hat{x}}{1-\hat{x}}
\end{equation}
and define the change of coordinates
\begin{alignat}{2}
  \label{eq:97} %
T(\dens)&\triangleq\begin{bmatrix}t(\dens_1)&\ldots &t(\dens_{|\Ramps|})&\dens'_\OLinks\end{bmatrix}'\qquad& T^{-1}(\hdens)&=\begin{bmatrix} t^{-1}(\hdens_1)&\ldots & t^{-1}(\hdens_{|\Ramps|})&\hdens_\OLinks'\end{bmatrix}'
\end{alignat}
where we assume onramps $\Ramps$ are enumerated $1,\ldots,|\Ramps|$ and $\dens_\OLinks$ is the vector of densities for links $\OLinks$ for some enumeration of $\OLinks$.
Let $\hdens=\begin{bmatrix}\hdens_\Ramps'& \hdens_\OLinks'\end{bmatrix}'\triangleq T(\dens)$ and observe
\begin{alignat}{2}
\nonumber  \dot{\hdens}_l&=\left({1-\hdens_l}\right)^2F_l\left({\dens}\right)\\
&=\left({1-\hdens_l}\right)^2F_l\left(T^{-1}({\hdens})\right)\\
  \label{eq:97-2}&=:\hat{F}_l(\hdens)&\quad\forall l\in\Ramps.
\end{alignat}
Note that only onramps undergo a coordinate change, that is, $\dens_\OLinks=\hdens_{\OLinks}$.  

Similarly define 
\begin{equation}
  \label{eq:108}
\hat{F}_l(\hdens)\triangleq F_l(T^{-1}(\hdens))\quad \forall l\in\OLinks
\end{equation}
so that $\dot{\dens}_l=\dot{\hdens}_l=\hat{F}_l(\hdens)$ for all $l\in\OLinks$. Let $\hat{F}(\hdens)\triangleq \begin{bmatrix} \hat{F}_1(\hdens)&\ldots &\hat{F}_{|\Links|}(\hdens)\end{bmatrix}'$, then $\dot{\hdens}=\hat{F}(\hdens)$.

We introduce the change of coordinates so that $\hdens_l$ remains bounded even as $\dens_l\to\infty$ for $l\in\Ramps$. Furthermore, the definition of $\hat{F}(\hdens)$ can be suitably extended to the case where $\hdens_l=1$ for all $l\in\Ramps'$ for some subset $\Ramps'\subseteq \Ramps$, even though $T^{-1}(\hdens)$ is not defined for such $\hdens$. In particular, let
\begin{align}
  \label{eq:8}
\hDomain'&\triangleq \{\hdens: \hdens_l\in[0,1)\ \forall l\in\Ramps\text{ and }\hat{\dens}_l\in[0,\densjam_l]\ \forall l\in\OLinks\}
\end{align}
and
$\hDomain\triangleq \textbf{cl}(\hat{\Domain}')$, 
$S\triangleq \hDomain\backslash\hDomain'$
where $\textbf{cl}(\cdot)$ denotes closure. Observe that \eqref{eq:97-2} and \eqref{eq:108} define $\hat{F}(\cdot)$ on $\hat{\Domain}'$. Furthermore, Assumption 1 ensures that $\hat{F}(\cdot)$ is Lipschitz continuous on $\hDomain'$.
Lipschitz continuity implies uniform continuity, and thus there exists a unique extension of $\hat{F}(\cdot)$ to $\hDomain$ given by \cite[Chapter 4]{Rudin:1976qf}
\begin{align}
  \label{eq:119}
  \hat{F}(\hat{x})\triangleq\lim_{\substack{\hat{y}\to \hat{x}\\\hat{y}\in\hDomain'}}\hat{F}(\hat{x})\quad \forall \hat{x}\in S.
\end{align}
This definition is equivalent to interpreting $\fluxout_l(t^{-1}(1))\triangleq \fluxoutmax_l$ for $l\in\Ramps$ in the PP/FIFO rule. Existence and uniqueness of solutions to $\dot{\hdens}=\hat{F}(\hdens)$ initialized in $\hDomain$ follows readily.
We now define $\bF(\cdot):\hDomain\to\mathbb{R}^{|\Links|}$ as follows:
\begin{align}
  \label{eq:122}
  \bF(\hat{x})\triangleq
\left\{  \begin{aligned}
    &F(T^{-1}(\hat{x}))&&\text{if }\hat{x}\in\hDomain'\\
\lim_{\substack{\hat{y}\to\hat{x}, \hat{y}\in\hDomain'}}&\bF(\hat{y})&&\text{if }\hat{x}\in S
  \end{aligned}\right.
\end{align}
where again the limit is guaranteed to exist because $\bF(\cdot)$ is Lipschitz continuous on $\hDomain'$. 

We have $\hDomain$ is compact, convex, and positively invariant. It follows that there exists a stationary point $\hdens^\text{e}$ such that $\bar{F}(\hdens^\text{e})=0$ \cite[Theorem 4.20]{blanchini2008set}. This stationary point corresponds to an equilibrium density $\dense$ in the original coordinates via the map $T^{-1}(\cdot)$ in \eqref{eq:97} and thus gives an equilibrium flow as defined in Definition \ref{def:equil2}. In particular, $\hdens^\text{e}_l=1$ for $l\in \Ramps$ implies $\dense_l=\infty$.
\end{proof}

\subsection{Piecewise Differentiability}
\label{sec:pw_diff}
Assume $d_l(t)\equiv d_l$ for some constant $d_l$ for all $l\in\Ramps$ so that the dynamics are autonomous. Note that the solution of \eqref{eq:11-2}--\eqref{eq:11-3} is
\begin{align}
  \label{eq:26}
\textstyle &\alpha^v(\dens)=
\begin{cases}
\textstyle \min\left\{1,\min_{k\in\Lout_v}\left\{\left(\sum_{j\in\Lin_v}\turn^v_{jk} \fluxout_{j}(\dens_{j})\right)^{-1}\fluxin_k(\dens_k)\right\}\right\}\\
&\hspace*{-.67in}\text{if }\exists l\in\Lin_v\text{ s.t. }\dens_l>0\\
1&\hspace{0in}\text{otherwise,}
\end{cases}
\end{align}
thus $\{\flowout_{l}(\dens)\}_{l\in\Lin_v}$ is uniquely defined in \eqref{eq:11}. Furthermore, by considering the finite set of functions possible for $\alpha^v(\dens)$ determined by the minimizing $k\in\Lout_v$ in \eqref{eq:26}, we conclude that $\flowout_l(\dens)$ and thus $F_l(\dens)$ is a continuous selection of differentiable functions. Indeed, at each junction, either every outgoing link has adequate supply to accomodate the demand of incoming links, or there exists at least one link that does not have adequate supply. If more than one link does not have adequate supply, the most restrictive link determines the flow through the junction. Thus, for each $v\in\Verts$, there are $|\Lout_v|+1$ functions possible for $\alpha^v(\dens)$ in \eqref{eq:26}. We then consider $F(\dens)$ to be selected from $\prod_{v\in \Verts}(|\Lout_v|+1)$ \emph{modes} of the network.

Let $\mathcal{I}$ denote an index set of these possible modes, and let $F^{(i)}(\dens)$ for $i\in\mathcal{I}$ denote the particular mode defined implicitly by the corresponding minimizers of \eqref{eq:26} for each $v\in\Verts$. The function $F(\dens)$ is then {piecewise differentiable}. Let $J^{(i)}(\dens)$ denote the Jacobian of $F^{(i)}(\dens)$, which is well defined on $\{\dens\in\Domain^\circ\mid F(\dens)=F^{(i)}(\dens)\}$. Consider the directional derivative
\begin{align}
  \label{eq:141}
  F'(\dens_0;y)\triangleq \lim_{\substack{h\to 0\\h>0}}\frac{F(\dens_0+h y)-F(\dens_0)}{h}.
\end{align}
A key property of piecewise differentiable functions is that the derivative \eqref{eq:141} exists for all $\dens_0\in\Domain^\circ$ and $y\in\mathbb{R}^{|\Links|}$, and   $F'(\dens_0;y)\in\{J^{(i)}(\dens_0)y\mid i\in\mathcal{I}\}.$
It follows \cite{Scholtes:2012fk} that 
\begin{equation}
  \label{eq:116}
  \dot{F}(\dens)\triangleq \frac{d}{dt}F(\dens(t))\in\{J^{(i)}(\dens)F(\dens)\mid i\in\mathcal{I}\}.
\end{equation}
We use \eqref{eq:116} below when we consider stability of the traffic network.

\subsection{Cooperativity and Convergence in Networks with only Merging Junctions}

\begin{prop}[Proposition 9 of \cite{Coogan:2014ph}]
\label{prop:stab1}
    Given constant input flow $\{\inflow_l\}_{l\in\Ramps}$. 
\renewcommand{\labelenumi}{ (\arabic{enumi}) }
If 
    \begin{enumerate}
    \item $|\Lout_v|\leq 1$, for all $v\in\Verts$,
     \item For all $v\in\Verts$, there exists $\Gamma_v$ s.t. $\off_l=\Gamma_v$ $\forall l\in\Lin_v$
    \end{enumerate}
then there exists a unique equilibrium flow $\{\flowe_l\}_{l\in\Links}$ and
\begin{alignat}{2}
  \label{eq:56}
  \lim_{t\to\infty} \flowin_l(\dens(t))&=\flowe_l&&\forall l\in\OLinks\\
  \lim_{t\to\infty} \flowout_l(\dens(t))&=\flowe_l&\qquad&\forall l\in\Links
\end{alignat}
for any initial condition $\dens(0)\in\Domain$.

\end{prop}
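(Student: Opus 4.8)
The plan is to establish the statement in three parts: existence of an equilibrium flow, uniqueness of the equilibrium flow, and global convergence of all link flows to it. Existence is immediate from Proposition \ref{prop:exist}. For uniqueness I would first observe that condition (1) makes the network a polytree: when every vertex has out-degree at most one, each vertex has a unique directed successor, so no two distinct directed paths can reconverge, and the underlying undirected graph therefore contains no cycle. With the polytree property in hand, uniqueness of the equilibrium flow follows from Proposition \ref{prop:unique2} in the infeasible case and from Proposition \ref{prop:freeflow} in the feasible case, so that a single equilibrium flow $\{\flowe_l\}_{l\in\Links}$ is singled out regardless of feasibility.

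The heart of the argument is convergence, which I would obtain from cooperativity together with a compartmental Lyapunov function. Under condition (1) each non-sink junction $v$ has a single outgoing link $k$, so the PP/FIFO rule collapses to $\flowout_l=\alpha^v\fluxout_l(\dens_l)$ with $\alpha^v=\min\{1,\fluxin_k(\dens_k)/\sum_{j\in\Lin_v}\turn_{jk}\fluxout_j(\dens_j)\}$ and, by \eqref{eq:9}, $\flowin_k=\min\{\sum_{j\in\Lin_v}\turn_{jk}\fluxout_j(\dens_j),\fluxin_k(\dens_k)\}$. From these closed forms I would check that every off-diagonal partial $\partial F_l/\partial\dens_m$ (for $m\neq l$) is nonnegative: increasing a sibling density or the downstream density only lowers $\alpha^v$ and hence raises the $-\flowout_l$ contribution, while the inflow term is monotone in the incoming demands; crucially, the two \emph{distinct} outgoing links responsible for the counterexample of Example \ref{ex:2} cannot coexist here. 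Thus each modal Jacobian $J^{(i)}(\dens)$ associated with \eqref{eq:26} is Metzler, i.e. the system is cooperative. Note that condition (2) is \emph{not} needed for this step.

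For convergence I would pass to the compact coordinates $\hDomain$ of Proposition \ref{prop:exist} (so that saturated onramps with $\dense_l=\infty$ are included) and take $V(\dens)=\|F(\dens)\|_1$ as a candidate Lyapunov function, exactly the function suggested after Proposition \ref{prop:strictfeas1}. Using $\dot F\in\{J^{(i)}(\dens)F(\dens)\}$ from \eqref{eq:116} and writing $\sigma_l=\operatorname{sign}(F_l)$, a short computation gives $\frac{d}{dt}\|F\|_1=\sum_k F_k\sum_l\sigma_l J^{(i)}_{lk}\le\sum_k|F_k|\sum_l J^{(i)}_{lk}$, where the inequality uses the Metzler sign pattern and $\sigma_k\sigma_l\le 1$. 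It therefore suffices that every column sum $\sum_l J^{(i)}_{lk}$ be nonpositive. This is where condition (2) enters decisively: with a uniform off-ramp fraction $\turn_{jk}=1-\Gamma_v$ for all $j\in\Lin_v$, the congested-mode identity $\alpha^v\sum_{j\in\Lin_v}\fluxout_j(\dens_j)=\fluxin_k(\dens_k)/(1-\Gamma_v)$ makes the aggregate junction outflow independent of how demand is distributed among the incoming links, which forces the column sum with respect to any incoming density to vanish; in the free-flow mode the same column sum reduces to $-\Gamma_v\frac{d}{d\dens_k}\fluxout_k(\dens_k)\le 0$. Hence $\dot V\le 0$ on $\hDomain$, and a LaSalle argument confines every trajectory to the largest invariant set on which $F=0$; since that set consists of equilibrium densities all sharing the unique equilibrium flow, continuity of $\flowout_l$ and $\flowin_l$ yields $\flowout_l(\dens(t))\to\flowe_l$ for all $l\in\Links$ and $\flowin_l(\dens(t))\to\flowe_l$ for all $l\in\OLinks$.

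I expect the main obstacle to be the column-sum verification across the piecewise-differentiable modes, and in particular establishing it in the congested FIFO mode where $\alpha^v$ is active; this is precisely the place where dropping condition (2) would make $\alpha^v\sum_{j\in\Lin_v}\fluxout_j(\dens_j)$ depend on the demand split and could render a column sum positive, breaking the Lyapunov inequality. A secondary technical point is the compactification: one must check that $V$ and the modal decomposition extend continuously to the boundary faces $\{\hdens_l=1\}$ of $\hDomain$ (equivalently that $\fluxout_l(\infty)=\fluxoutmax_l$ is the correct boundary value) so that LaSalle may be applied on the compact set and the conclusion pulled back to convergence of the physical flows even when some onramp densities diverge.
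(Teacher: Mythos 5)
Your overall strategy---cooperativity from condition (1), a column-sum (compartmental) argument for an $\ell_1$-type Lyapunov function using condition (2), and LaSalle in the compactified coordinates---is the same as the paper's (Lemmas \ref{lem:2} and \ref{lem:main}), and your uniqueness argument via the polytree property and Proposition \ref{prop:unique2} matches as well. However, there is a genuine gap in the Lyapunov step: the \emph{unweighted} function $V=\|F(\dens)\|_1$ does not work, because the column sums of $J^{(i)}$ are not all nonpositive. Your verification only treats the columns indexed by the \emph{incoming} densities at a junction. Consider instead the column indexed by the density $\dens_k$ of the single outgoing link $k$ at a congested junction $v=\tail(k)$: the supply $\fluxin_k(\dens_k)$ enters $F_k$ through $\flowin_k=\fluxin_k(\dens_k)$ with coefficient $+1$, and enters $\sum_{j\in\Lin_v}F_j$ through $-\sum_{j\in\Lin_v}\flowout_j=-(1-\Gamma_v)^{-1}\fluxin_k(\dens_k)$, so the contribution of junction $v$ to that column sum is
\[
\Bigl(1-(1-\Gamma_v)^{-1}\Bigr)\frac{d}{d\dens_k}\fluxin_k(\dens_k)
=-\frac{\Gamma_v}{1-\Gamma_v}\,\frac{d}{d\dens_k}\fluxin_k(\dens_k)>0
\]
whenever $\Gamma_v>0$, since $\fluxin_k$ is strictly decreasing. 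If, for instance, $\head(k)$ is in free flow with $\Gamma_{\head(k)}=0$, the head contribution to that column vanishes and the entire column sum is strictly positive, so $\dot V\le 0$ fails. This is precisely why the paper introduces the diagonal weights $W_l=w_{l_1}\cdots w_{l_{n_l}}$, the products of the factors $1-\Gamma$ along the unique path from $l$ to the sink in \eqref{eq:74}--\eqref{eq:113}, and works with $V=\|WF(\dens)\|_1$: since $W_j=(1-\Gamma_v)W_k$ for $j\in\Lin_v$, the two terms above cancel identically (Lemma \ref{lem:2}), making $WJ^{(i)}$ compartmental. Your argument goes through once these weights are inserted; without them the key inequality is false.

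A secondary issue is your LaSalle conclusion ``confines every trajectory to the largest invariant set on which $F=0$.'' For infeasible input a saturated onramp has $F_l=d_l-\flowout_l(\dens)>0$ for all time (its density diverges), so the relevant invariant set necessarily contains points where $F\neq 0$; Definition \ref{def:equil2} allows exactly this. The paper's Lemma \ref{lem:main} handles it by showing, on $\{\dot{V}=0\}$, that the negative-flow index set is empty and that the surviving positive components are confined to onramps whose outflow is constant, which then yields convergence of the flows rather than of $F$ to zero. You should replace the appeal to $F\equiv 0$ with an argument of this kind.
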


The condition $|\Lout_v|\leq 1$, for all $v\in\Verts$ implies that each junction is a merging junction and consists of only one outgoing link or no outgoing links. The constraint $\off_l=\Gamma_v$ $\forall l\in\Lin_v$ for some $\Gamma_v$ implies that the fraction of flow exiting a link that is routed off the network is the same for each incoming link at a particular junction. To prove Proposition \ref{prop:stab1}, we introduce the following definition from \cite{Jacquez:1993uq}:%
\begin{definition}
\label{def:comp}
  A matrix $A\in\mathbb{R}^{n\times n}$ is a \emph{compartmental matrix} if $[A]_{ij}\geq 0$ for all  $i\neq j$ and $\sum_{i=1}^n[A]_{ij}\leq 0$ for all $j$
where $[A]_{ij}$ is the $ij$-th entry of $A$.
\end{definition}

Equivalently, $A$ is a compartmental matrix if and only if $A$ is \emph{Metzler} \cite{berman} and $\mu_1(A)\leq 0$ where $\mu_1(A) \triangleq \lim_{h\to 0^+}\frac{1}{h}({||I+hA||_1-1})$ is the \emph{logarithmic norm} of $A$ and $||A||_1$ is the matrix norm induced by the vector one-norm \cite{Desoer:1972uq}.  This observation provides a connection to \emph{contraction theory} for non-Euclidean norms \cite{Sontag:2010fk}. In particular, Lemma \ref{lem:main} below shows that $F(\dens)$ is nonexpansive in a region of the state-space relative to a weighted one-norm.

\begin{lemma}
\label{lem:main}
  Given ${\Omega}\subseteq \Domain$ and diagonal matrix $W$ with positive entries on the diagonal such that $WJ^{(i)}({\dens})$ is a compartmental matrix for all $i\in\mathcal{I}$ and all ${\dens}\in\Omega^\circ$ such that $F^{(i)}(\dens)=F(\dens)$ where $J^{(i)}(\dens)$ denotes the Jacobian of $F^{(i)}(\dens)$ and $\Omega^\circ$ denotes the interior of $\Omega$. Then $V(\dens)\triangleq ||WF(\dens)||_1$ is decreasing along trajectories $\dens(t)$ of the traffic network when $\dens(t)\in\Omega$. Moreover, if $\Omega$ is positively invariant, then the flows of the network converge to an equilibrium flow as defined in Definition \ref{def:equil2}.
\end{lemma}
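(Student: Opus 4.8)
The plan is to treat $V(\dens)=\vnorm{WF(\dens)}_1$ as a weak Lyapunov function: first establish $\dot V\le 0$ directly from the compartmental hypothesis, and then obtain convergence of the flows by a LaSalle-type invariance argument carried out in the compactified coordinates of Proposition \ref{prop:exist}. I would begin from the piecewise-differentiable structure recorded in \eqref{eq:116}: along a trajectory $\dens(t)\in\Omega$ the derivative $\tfrac{d}{dt}F(\dens(t))$ exists for almost every $t$ and equals $J^{(i)}(\dens)F(\dens)$ for an active mode $i\in\mathcal I$ with $F^{(i)}(\dens)=F(\dens)$, which is precisely the mode for which $WJ^{(i)}(\dens)$ is assumed compartmental.

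Writing $V=\sum_{l}W_{ll}\lvert F_l\rvert$, I would bound the upper right Dini derivative by choosing, for each $l$, a sign $\sigma_l=\mathrm{sgn}(F_l)$ when $F_l\neq 0$ and $\sigma_l\in[-1,1]$ when $F_l=0$, so that $D^+V\le \sum_l W_{ll}\sigma_l(J^{(i)}F)_l$. Regrouping by column index this equals $\sum_k \lvert F_k\rvert\,\sigma_k\sum_l W_{ll}\sigma_l J^{(i)}_{lk}$. Since $WJ^{(i)}$ is compartmental it is Metzler, so $J^{(i)}_{lk}\ge 0$ for $l\neq k$ and $\sigma_k\sigma_l J^{(i)}_{lk}\le J^{(i)}_{lk}$; hence the inner factor is at most the $k$th column sum $\sum_l W_{ll}J^{(i)}_{lk}\le 0$, and $D^+V\le 0$. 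Equivalently, this is the statement that the relevant logarithmic norm $\mu_1(WJ^{(i)}W^{-1})$ is nonpositive, i.e. $F$ is nonexpansive in the weighted one-norm. Thus $V$ is nonincreasing while $\dens(t)\in\Omega$, proving the first claim.

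For convergence I would pass to $\hdens=T(\dens)$ from the proof of Proposition \ref{prop:exist}, where $\hDomain$ is compact and positively invariant and, crucially, the equilibrium flows of Definition \ref{def:equil2} are exactly the stationary points of the extended field $\bF$: an onramp with $\dense_l=\infty$ becomes $\hdens_l=1$, where $\dot{\hdens}_l=(1-\hdens_l)^2F_l=0$. The continuous extension $\bar V(\hdens)=\vnorm{W\bF(\hdens)}_1$ is nonincreasing along trajectories by the first part and bounded below, so $\bar V(\hdens(t))\to V^\star$. Applying a nonsmooth version of LaSalle's invariance principle on this compact invariant set, every trajectory approaches the largest invariant set $E$ on which $\tfrac{d}{dt}\bar V\equiv 0$. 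I would then use the per-column bound above, now forced to equality, together with conservation of flow \eqref{eq:9} and acyclicity, to propagate any surviving imbalance downstream and conclude that $E$ consists of stationary points of $\bF$; continuity of $\flowout_l$ and $\flowin_l$ then yields convergence of the flows to an equilibrium flow.

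The Lyapunov inequality is routine once \eqref{eq:116} is in hand; the delicate part is the convergence step. The main obstacle I anticipate is making LaSalle rigorous for this nonsmooth system—$V$ is only locally Lipschitz and $F$ only piecewise differentiable—and, above all, showing that the invariant set $E$ where $\dot{\bar V}=0$ genuinely consists of equilibrium densities rather than of configurations that merely balance the (possibly vanishing) column sums of $WJ^{(i)}$. The interaction with onramps whose density diverges, where $F_l\neq 0$ yet $\dot{\hdens}_l=0$, is the subtlest point and is exactly where the compactification $T$ and acyclicity must be combined carefully.
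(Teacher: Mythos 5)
Your first step---the decrease of $V$---is correct and is essentially the paper's own argument in different clothing: your sign vector $\sigma$ with $\sigma_l=\mathrm{sgn}(\dot F_l)$ at zeros of $F_l$ is exactly the paper's partition into the index sets $I$, $J$, $K$, and your column-sum regrouping reproduces its block computation showing $\dot{\bV}\leq 0$ from the Metzler-plus-nonpositive-column-sum structure of $WJ^{(i)}$.

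The convergence half, however, contains a genuine gap, and it sits precisely where you say you anticipate trouble: identifying the largest invariant subset of $E=\{\hat x:\dot{\bV}(\hat x)=0\}$. Your proposed mechanism---``use the per-column bound, now forced to equality, together with conservation of flow \eqref{eq:9} and acyclicity, to propagate any surviving imbalance downstream''---is not developed, and it is not the mechanism that actually closes the argument. The condition $\dot{\bV}\equiv 0$ only yields that certain weighted sums of $\bF$ over the sign classes are conserved; it does not by itself force $\bF$ (or $\hat F$) to vanish, and acyclicity plays no role at this stage of the paper's proof. What the paper does instead is: (i) show that on a trajectory in $E$ the sets $L_+(t)=\{l:\bF_l>0\}$ and $L_-(t)=\{l:\bF_l<0\}$ are monotone nondecreasing in time, using the Metzler structure to get the Gronwall-type bound $\bF_l(\hat z(t))\geq \bF_l(\hat z(\tau))e^{(t-\tau)q_{ll}/W_l}$; (ii) derive the two identities $\mathbf{1}'W_{L_+}\bF_{L_+}-\mathbf{1}'W_{L_-}\bF_{L_-}\equiv V_\infty$ and $\mathbf{1}'W_{L_+}\bF_{L_+}+\mathbf{1}'W_{L_-}\bF_{L_-}\equiv C$; and (iii) rule out nonzero values by two integral contradictions---forward in time against boundedness below of the densities to kill $L_-$, and backward in time (after excising the onramps pinned at $\hdens_l=1$, your set of links with $F_l\neq 0$ yet $\dot{\hdens}_l=0$) to kill the remaining positive part. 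Without steps of this kind, LaSalle leaves you with trajectories that merely balance the column sums, which is exactly the failure mode you flag but do not resolve. One further small correction: the equilibria of Definition \ref{def:equil2} correspond to stationary points of the compactified field $\hat F$, not of $\bF$; on $\Ramps^\infty$ one has $\bF_l\equiv c_l>0$ at equilibrium, so asserting that $E$ consists of stationary points of $\bF$ is, as stated, false.
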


\begin{proof}
The following proof is adapted from the proof of  \cite[Theorem 2]{Maeda:1978fk}. Consider the change of coordinates constructed in the proof of Proposition \ref{prop:exist} in Section \ref{sec:prop:exist}, and define $\bV(\cdot):\hDomain\to\mathbb{R}$ by
\begin{align}
  \label{eq:115}
\bV(\hdens)\triangleq ||W\bF(\hdens))||_1
\end{align}
where $\bF(\cdot)$ is given in \eqref{eq:122} so that $\bV(\hdens)=V(\dens)$ when $\hdens=T(\dens)$.  Let $\{F^{(i)}(\cdot)\mid i\in\mathcal{I}\}$ be the collection of modes as described in Section \ref{sec:pw_diff}.
It follows that
\begin{align}
  \label{eq:117}
  \dot{\bF}(\hat{x})\in\{\bJ^{(i)}(\hat{x})\bF(\hat{x})\mid i\in\mathcal{I}\}
\end{align}
where, defining $\hDomain^\circ$ to be the interior of $\hDomain$,
\begin{align}
  \label{eq:120}
  \bJ^{(i)}(\hat{x})\triangleq
  \begin{cases}
    J^{(i)}(T^{-1}(\hat{x}))&\text{if }\hat{x}\in\hDomain^\circ\\
\lim_{\hat{y}\to\hat{x},\hat{y}\in\hDomain^\circ}\bJ^{(i)}(\hat{y})&\text{if }\hat{x}\in \hDomain\backslash\hDomain^\circ.
  \end{cases}
\end{align}
By assumption and the above analysis, $W\bJ^{(i)}(\hat{x})$ is a compartmental matrix for all $\hat{x}\in\hOmega$ where $\hOmega\triangleq \{T(x):x\in\Omega\}$ and all $i$ such that $\bF(\hat{x})=\bJ^{(i)}(\hat{x})\bF(\hat{x})$, \emph{i.e.}, the selected index in \eqref{eq:117}.

There exists a vector $\nu(\hat{x})\in\{-1,0,1\}^{|L|}$ such that $V(\hat{x})=\nu(\hat{x})'W{\bF}(\hat{x})$ and $\dot{V}(\hat{x}(t))=\nu(\hat{x})'W\dot{\bF}(\hat{x})$ with the property
$\bF_l>0$ implies $\nu_l=1$ and $\bF_l<0$ implies $\nu_l=-1$.

We drop the supscript $(i)$ and time dependence $(t)$ notation for clarity. Let
\begin{align}
  \label{eq_new:1}
  I&=\{l\mid \bF_l >0, \text{ or }\bF_l=0, \dot{\bF}_l>0\}\\
  J&=\{l\mid \bF_l <0, \text{ or }\bF_l=0, \dot{\bF}_l<0\}\\
 K&=\{l\mid \bF_l =0 \text{ and }\dot{\bF}_l=0\}.
\end{align}
We partition $\bQ\triangleq W\bJ$ into blocks such that
\begin{align}
  \label{eq_new:2}
  \begin{bmatrix}
   W_I\dot{\bF}_I\\
   W_J\dot{\bF}_J\\
    W_K\dot{\bF}_K
  \end{bmatrix}
=
\begin{bmatrix}
  \bQ_{II}&  \bQ_{IJ} &  \bQ_{IK}\\
  \bQ_{JI}&  \bQ_{JJ} &  \bQ_{JK}\\
  \bQ_{KI}&  \bQ_{KJ} &  \bQ_{KK}\\
\end{bmatrix}
\begin{bmatrix}
  \bF_I\\
  \bF_J\\
  \bF_K
\end{bmatrix}
\end{align}
where $\bQ_{IJ}=[q_{ij}]_{i\in I,j\in J}$, $\bF_I=\{\bF_l\}_{l\in I}$, \emph{etc.} and $W_I, W_J, W_K$ are the diagonal blocks of $W$. Then
\begin{align}
  \label{eq_new:3}
  \dot{\bV}=&\mathbf{1}^TW_I\dot{\bF}_I-\mathbf{1}^TW_J\dot{\bF}_J\\
  \label{eq_new:3-2}=&\mathbf{1}^T\bQ_{II}\bF_I+\mathbf{1}^T\bQ_{IJ}\bF_J-\mathbf{1}^T\bQ_{JI}\bF_I-\mathbf{1}^T\bQ_{JJ}\bF_J\\
\nonumber =&-(2\mathbf{1}^T\bQ_{JI}+\mathbf{1}^T\bQ_{KI}+\alpha_I)\bF_I\\
\label{eq_new:4-2}&+(2\mathbf{1}^T\bQ_{IJ}+\mathbf{1}^T\bQ_{KJ}+\alpha_J)\bF_J
\end{align}
where
\begin{align}
  \label{eq_new:4}
  \alpha_I&=-(\mathbf{1}^T\bQ_{II}+\mathbf{1}^T\bQ_{JI}+\mathbf{1}^T\bQ_{KI})\\
  \alpha_J&=-(\mathbf{1}^T\bQ_{IJ}+\mathbf{1}^T\bQ_{JJ}+\mathbf{1}^T\bQ_{KJ})
\end{align}
and $-\alpha_I\leq 0$ and $-\alpha_J\leq 0$ (where $\leq$ is interpreted elementwise) because the column sums of $\bQ$ are less than or equal to zero since $\bQ$ is assumed to be a compartmental matrix. Note that, additionally, the entries of $\bQ_{IJ}$, $\bQ_{JI}$, $\bQ_{IK}$, $\bQ_{KI}$, $\bQ_{JK}$, and $\bQ_{KJ}$ are nonnegative since $\bQ$ is nonnegative for entries not on the diagonal. It then follows that \eqref{eq_new:4-2} is nonpositive, \emph{i.e.},   $\dot{\bV}\leq 0$.

Supposing $\Omega$ is positively invariant, we show convergence to an equilibrium flow via LaSalle's invariance principle. To that end, define $E\triangleq\{\hat{x}:\dot{\bV}(\hat{x})=0\}$.  Let $\hat{z}(t)$ be a trajectory completely contained in $E$ and consider $L_+(t)=\{l\mid \bF_l(\hat{z}(t))>0\}$ and $L_-(t)=\{l\mid \bF_l(\hat{z}(t))<0\}$. We have $\bar{V}(\hat{z})\equiv V_\infty$ for some $V_\infty$, and thus 
\begin{equation}
  \label{eq:93}
\mathbf{1}'W_{L_+}\bF_{L_+}-\mathbf{1}'W_{L_-}\bF_{L_-}\equiv V_\infty.  
\end{equation}
We now claim the sets $L_+(t)$ and $L_-(t)$ are monotonically increasing with respect to set inclusion. %
To prove the claim, observe that since $\dot{\bV}\equiv 0$ and considering \eqref{eq_new:3-2}, we have $\mathbf{1}^T\bQ_{JI}\bF_I\equiv 0$ and $\mathbf{1}^T\bQ_{IJ}\bF_J\equiv 0$. This implies
\begin{align}
  \label{eq_new:5}
  \sum_{k\in I}q_{lk}\bF_k\equiv 0 \quad \text{for all }l\in J\\
  \sum_{k\in J}q_{lk}\bF_k\equiv 0 \quad \text{for all }l\in I
\end{align}
since the entries of $\bQ_{JI}$, $\bQ_{IJ}$, $\bF_I$ are nonnegative and the entries of $\bF_J$ are nonpositive and thus each entry of $\bQ_{JI}\bF_I$ and $\bQ_{IJ}\bF_J$ must be zero.
 Since $L_+\subset I$ and $L_-\subset J$, and $\bF_l=0$ for all $l\in I\backslash L_+$ and for all $l\in J\backslash L_-$, we have
 \begin{align}
   \label{eq_new:6}
   \sum_{k\in L_+}q_{lk}\bar{F}_k\equiv  0\text{ for all }l\in L_-\\
\sum_{k\in L_-}q_{lk}\bar{F}_k\equiv  0\text{ for all }l\in L_+.
 \end{align}
Consider $l\in L_+(\tau)$ at some time $\tau$, and from \eqref{eq_new:2}, we have
\begin{align}
  \label{eq:121}
  W_l\dot{\bF}_l&=q_{ll}\bF_l+\sum_{k\in L_+\backslash \{l\}}q_{lk}\bF_k+\sum_{k\in L_-}q_{lk}\bF_k\\
&=q_{ll}\bF_l+\sum_{k\in L_+\backslash \{l\}}q_{lk}\bF_k\\
&\geq q_{ll}\bF_l 
\end{align}
where $q_{ll}\leq 0$. It follows that since $\bF_l(\hat{z}(\tau))>0$, then
\begin{align}
  \label{eq:1}
 \bar{F}_l(\hat{z}(t))\geq \bar{F}_l(\hat{z}(\tau))e^{(t-\tau)q_{ll}/W_l}>0
\end{align}
for all $t\geq \tau$, and thus $\bF_l(\hat{z}(t))>0$ for all $t\geq \tau$, implying that $L_+$ is monotonically increasing. A similar analysis holds for $L_-(t)$, proving the claim.

Furthermore, because $\dot{\bV}(\hat{z})\equiv 0$, we must have $\alpha_I\bar{F}_I\equiv 0$ and $\alpha_J\bar{F}_J\equiv 0$ since, in \eqref{eq_new:4-2}, each term $\bar{Q}_{JI}$, $\bar{Q}_{KI}$, $\alpha_I$, $\bar{Q}_{IJ}$, $\bar{Q}_{KJ}$, and $\alpha_{J}$ are nonnegative. Since $\mathbf{1}^T\bar{Q}\bar{F}=-(\alpha_I\bar{F}_I+\alpha_J\bar{F}_J)$, we have that $\mathbf{1}^T\bar{Q}\bar{F}=\mathbf{1}'W\bJ\bF_l\equiv 0$ and thus $\mathbf{1}'\dot{\bF}\equiv 0$. Therefore
\begin{equation}
  \label{eq:95}
\mathbf{1}'W_{L_+}\bF_{L_+}+\mathbf{1}'W_{L_-}\bF_{L_-}\equiv C
\end{equation}
for $C$ some constant.

Combining \eqref{eq:93} and \eqref{eq:95}, we have that $2\mathbf{1}'W_{L_-}\bF_{L_-}\equiv C-V_\infty$. If $C-V_\infty\neq 0$, then $\int_0^\infty \mathbf{1}'W_{L_-}\bar{F}_{L_-} \ dt=-\infty$, but this is  a contradiction since $\dot{z}=\bF(\hat{z})$ where $z(t)\triangleq T^{-1}(\hat{z})$ and $z$ is bounded below. Thus $L_-(t)\equiv \emptyset$. 

Let $\Ramps^\infty\triangleq \{l\in\Ramps\mid \exists c>0 \text{ s.t. }\bF_l(\hat{z})\equiv c\}$, and let $M_+\triangleq L_+\backslash \Ramps^\infty$. Since $2\mathbf{1}'W_{L_+}\bF_{L_+}=V_\infty+C$, it follows that $\mathbf{1}'W_{M_+}\bF_{M_+}(\hdens)\equiv C_2$ for some constant $C_2\geq 0$.
If $C_2> 0$, then $\int_0^{-\infty}\mathbf{1}'W_{M_+}\bar{F}_{M_+} \ dt=\infty$, which is also a contradiction since with  $y(t)\triangleq z(-t)$ and $\hat{y}(t)\triangleq \hat{z}(-t)$, we have $\dot{y}_l=-\bF_l(\hat{y})$ for all $l\in M_+$ and $y_l(t)$ is bounded below. Therefore $C_2=0$, and we have shown $\bar{F}_l(\hat{z})\equiv 0$ for all $l\in \Links\backslash \Ramps^\infty$. Combined with the definition of $\Ramps^\infty$, this implies that $\hat{z}(t)\equiv \hat{z}^\text{e}\in \hat{\Domain}$. Furthermore, these are exactly the conditions required such that $z^\text{e}\triangleq T^{-1}(\hat{z}^\text{e})$ is an equilibrium density as defined in Definition \ref{def:equil2}.

\end{proof}

We now turn our attention to the class of networks considered in Proposition \ref{prop:stab1}. For networks satisfying condition 1 of Proposition \ref{prop:stab1}, $\Sink$ is a singleton, suppose $\Sink=\{v_\text{sink}\}$. Furthermore, for each $l\in\Links$ there exists a unique path $\{l_1,\ldots,l_{n_l}\}\subset\Links$ with $l_1=l$ such that $\head(l_{n_l})=v_\text{sink}$. Supposing 1) and 2) of Proposition \ref{prop:stab1}, let 
  \begin{align}
    \label{eq:74}
    w_l&\triangleq
    \begin{cases}
      1-\Gamma_{\head(l)}&\text{if }\Gamma_{\head(l)}<1\\
      1&\text{if }\Gamma_{\head(l)}=1
    \end{cases}\qquad \forall l\in\Links\\
  \label{eq:113}
 W_l&\triangleq w_{l_1}\cdot\ldots\cdot w_{l_{n_l}}. 
\end{align}

\begin{lemma}
\label{lem:2}
  Given a traffic network with constant input flows $\{\inflow_l\}_{l\in\Ramps}$ satisfying the 1) and 2) of Proposition \ref{prop:stab1}. Define $w_l$ and $W_l$ as in \eqref{eq:74}--\eqref{eq:113}, and let $W\triangleq \text{diag}(W_1,\ldots,W_{|\Links|})$. Then $  W\left(\frac{\partial F^{(i)}}{\partial \dens}(\dens)\right)$
is a compartmental matrix for all $i\in\mathcal{I}$ such that $F(\dens)=F^{(i)}(\dens)$ and $\dens\in\Domain^\circ$.

\end{lemma}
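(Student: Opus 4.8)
The plan is to verify directly the two defining properties of a compartmental matrix from Definition \ref{def:comp} for $\bQ\triangleq WJ^{(i)}(\dens)$: nonnegativity of the off-diagonal entries, and nonpositivity of every column sum. Since $W$ is diagonal with strictly positive entries, it merely scales rows and hence preserves the off-diagonal sign pattern of $J^{(i)}$; the column-sum condition, by contrast, is exactly where the particular choice of weights \eqref{eq:74}--\eqref{eq:113} must be used, and I expect it to be the crux.

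First I would record the structural consequences of conditions 1) and 2). Condition 1) forces each junction to have at most one outgoing link, so every link $l$ with $\head(l)\notin\Sink$ has a unique successor $l^+$, namely the single element of $\Lout_{\head(l)}$. Condition 2) together with Assumption \ref{assum:posturn} gives $\turn_{l,l^+}=1-\off_l=1-\Gamma_{\head(l)}=w_l$ whenever $\head(l)\notin\Sink$ (with $\turn_{l,l^+}>0$ there), while $w_l=1$ when $\head(l)\in\Sink$. Combined with \eqref{eq:113} this yields the telescoping relation $W_l=\turn_{l,l^+}W_{l^+}$ for every $l$ with $\head(l)\notin\Sink$, which is the engine of the proof.

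Next I would establish the column-sum condition through a conservation identity. The column sums of $\bQ$ are the components of the gradient $\frac{\partial}{\partial\dens}\sum_l W_l F_l^{(i)}$, so it suffices to show $\sum_l W_l F_l^{(i)}$ is nonincreasing in each $\dens_k$. Writing $\flowin_l$ via conservation \eqref{eq:9} and reindexing the sum over incoming links, the telescoping relation collapses
\[
\textstyle\sum_{l\in\OLinks}W_l\flowin_l=\sum_{j:\,\head(j)\notin\Sink}W_{j^+}\turn_{j,j^+}\flowout_j=\sum_{j:\,\head(j)\notin\Sink}W_j\flowout_j,
\]
so that all outflow contributions cancel against the $-\sum_l W_l\flowout_l$ terms except those of links emptying into a sink. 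Hence
\[
\textstyle\sum_l W_l F_l^{(i)}=\sum_{l\in\Ramps}W_l\inflow_l-\sum_{l:\,\head(l)\in\Sink}W_l\,\fluxout_l(\dens_l),
\]
using $\flowout_l=\fluxout_l(\dens_l)$ at sinks. The first sum is constant and the second is a sum of strictly increasing functions scaled by $W_l>0$, so each partial derivative of $\sum_l W_l F_l^{(i)}$ is nonpositive; this is precisely column-sum nonpositivity. Crucially this identity uses only flow conservation and the weight relation, both mode independent, so it holds for every fixed mode $F^{(i)}$.

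Finally I would check the off-diagonal entries are nonnegative, i.e.\ that each mode is cooperative; this is the merge-network cooperativity of \cite{Coogan:2014ph}, but can also be read off from the explicit $\alpha^v$ in \eqref{eq:26}. In a free-flow mode at $\tail(l)$, $\flowin_l=\sum_j\turn_{jl}\fluxout_j(\dens_j)$ contributes $\turn_{jl}\tfrac{d\fluxout_j}{d\dens_j}\ge0$ to $\partial F_l/\partial\dens_j$, while in a congested mode $\flowin_l=\fluxin_l(\dens_l)$ contributes nothing off-diagonal. Off-diagonal outflow dependence arises only when $\head(l)$ is congested, where with $m=l^+$ and $S=\sum_{j\in\Lin_{\head(l)}}\turn_{jm}\fluxout_j(\dens_j)$ one has $\flowout_l=\fluxin_m(\dens_m)\fluxout_l(\dens_l)/S$, giving $\partial F_l/\partial\dens_m=-\tfrac{d\fluxin_m}{d\dens_m}\fluxout_l/S\ge0$ and, for a sibling $j$, $\partial F_l/\partial\dens_j=\fluxin_m\fluxout_l\turn_{jm}\tfrac{d\fluxout_j}{d\dens_j}/S^2\ge0$, all nonnegative on $\Domain^\circ$ since $\fluxout$ is increasing, $\fluxin$ decreasing, and $S>0$. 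Thus $J^{(i)}$ is Metzler and so is $\bQ$, which together with the column-sum bound shows $\bQ$ is compartmental. The one genuine obstacle is the telescoping identity of the third step; everything else is a cited fact or a routine sign check.
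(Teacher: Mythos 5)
Your proof is correct, and while it rests on the same two structural facts as the paper's --- the telescoping weight relation $W_k=\turn_{k,k^+}W_{k^+}$ induced by condition 2) and the sign structure of the PP/FIFO flows --- it organizes the column-sum verification quite differently. The paper works column by column: for each $l$ it groups $\sum_k W_k\,\partial F_k/\partial\dens_l$ into a $\tail(l)$ block and a $\head(l)$ block (its Eq.~\eqref{eq:94}) and then runs a case analysis (congested vs.\ free-flow, sink vs.\ non-sink) to show the first block differentiates to zero and the second to something nonpositive. You instead sum first and differentiate second: flow conservation \eqref{eq:9} plus the weight relation collapse $\sum_l W_l F_l^{(i)}$ to $\sum_{l\in\Ramps}W_l\inflow_l-\sum_{l:\head(l)\in\Sink}W_l\fluxout_l(\dens_l)$, whose gradient is manifestly entrywise nonpositive, and --- as you correctly note --- this identity is mode-independent because conservation and the sink rule hold in every mode. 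This eliminates the paper's case analysis entirely and makes the role of $W$ (the surviving mass fraction en route to the sink) transparent; the paper's local version gives slightly finer information (which column sums are exactly zero, namely those with $\head(l)\notin\Sink$, though your identity yields this too). For the off-diagonal (Metzler) condition the paper only asserts cooperativity of merge networks with a pointer to \cite{Coogan:2014ph}, whereas you verify it by explicit differentiation of the PP/FIFO formulas; your computation is correct (the only off-diagonal dependencies are through upstream demands in free flow and through the bottleneck supply and sibling demands in congestion, all with the right signs, and $S>0$ on $\Domain^\circ$ by Assumptions 2--3), so your write-up is in this respect more self-contained than the paper's.
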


\begin{proof}

Consider a particular link $l$ and the corresponding $l$th column of $\partial F^{(i)}/\partial \dens$ for some $i\in\mathcal{I}$. In the following, we omit the superscript $(i)$ and all partial derivatives are assumed to correspond to the mode $i$. We have
\begin{align}
\nonumber\textstyle\sum_{k\in\Links}W_k \frac{\partial F_k}{\partial \dens_l}=\frac{\partial}{\partial \dens_l}\Big({\textstyle-\sum_{k\in\Lin_{\tail(l)}}W_k\flowout_k+W_l\flowin_l} \\
  \label{eq:94}\textstyle{\textstyle-\sum_{k\in\Lin_{\head(l)}}W_k\flowout_k+\sum_{k\in\Lout_{\head(l)}}W_k\flowin_k}\Big).
\end{align}

It can be shown that, for networks such that $|\Lout_v|\leq 1$ for all $v\in\Verts$, we have $\frac{\partial F_k}{\partial \dens_l}\geq 0$ for all $l\neq k$, \emph{i.e.}, the system is \emph{cooperative}  \cite{Hirsch:1985fk}. Observe that  $\turn^{\tail(l)}_{kl}=(1-\Gamma_{\tail(l)})$ and $W_k=(1-\Gamma_{\tail(l)})W_l$ for all $k\in\Lin_{\tail(l)}$ for all $l\in\OLinks$. We subsequently show
\begin{align}
  \label{eq:111}
\textstyle \frac{\partial}{\partial \dens_l}\left(-\sum_{k\in\Lin_{\tail(l)}}W_k\flowout_k(\dens) +W_l\flowin_l(\dens)\right)(x)&=0\\
  \label{eq:111-2}\textstyle \frac{\partial}{\partial \dens_l}\left(-\sum_{k\in\Lin_{\head(l)}}W_k\flowout_k(\dens)+\sum_{k\in\Lout_{\head(l)}}W_k\flowin_k(\dens)\right)(x)&\leq 0
\end{align}
for all $x\in\Domain^\circ$. Combining \eqref{eq:111}--\eqref{eq:111-2} with \eqref{eq:94} gives $\sum_{k\in\Links}W_k \frac{\partial F_k}{\partial \dens_l}\leq 0$ for all $l$,  thus proving the claim. To prove \eqref{eq:111}--\eqref{eq:111-2}, consider a particular $x\in\Domain^\circ$:

\noindent\textbf{(Flows at $\tail(l)$)}
If upstream demand exceeds the supply of link $l$, that is, $l\in \OLinks$ and $\fluxin_l(x_l)<\sum_{k\in\Lin_{\tail(l)}}\beta^{\tail(l)}_{kl}\fluxout_k(x_k)$, then the PP/FIFO rule stipulates
\begin{align}
  \label{eq:96}
  \flowout_k(y)&=\fluxin_l(y_l)\frac{\fluxout_k(y_k)}{\sum_{j\in\Lin_{\tail(l)}}\beta^{\tail(l)}_{jl}\fluxout_j(y_j)}\ \forall k\in\Lin_{\tail(l)}\\
\flowin_l(y)&=\fluxin_l(y_l)
\end{align}
for all $y\in \Ball_\epsilon(x)$ for some $\epsilon>0$ where $\Ball_\epsilon(x)$ is the ball of radius $\epsilon$ centered at $x$. Then $\sum_{k\in\Lin_{\tail(l)}}\flowout_k(y)=\left(1-\Gamma_{\tail(l)}\right)^{-1}\fluxin_l(y_l)$ and %
\begin{align}
  \label{eq:99}
\textstyle       -\sum_{k\in\Lin_{\tail(l)}}W_k \flowout_k(y)+W_l\flowin_l(y)%
&=0\ \forall y\in\Ball_\epsilon(x)
\end{align}
which implies \eqref{eq:111}. 

If link $l$ has adequate supply, we have $\flowout_k(x)=\fluxout_k(x_k)$ for $k\in\Lin_{\tail(l)}$ and $\flowin_l(x)=\sum_{k\in\Lin_{\tail(l)}}\beta_{kl}^{\tail(l)}\fluxout(x_k)$, neither of which is a function of $x_l$, and thus also \eqref{eq:111} holds.

\noindent\textbf{(Flows at $\head(l)$)} By hypothesis, $\Lout_{\head(l)}$ is either empty or a singleton. If it is empty, then $\flowout_k(x)=\fluxout_k(x_k)$ for all $k\in\Lin_{\head(l)}$ and the lefthand side of \eqref{eq:111-2} is
\begin{align}
  \label{eq:110}
\textstyle-\sum_{k\in\Lin_{\head(l)}}W_k\frac{\partial}{\partial \dens_l}\flowout_k(x)&=\textstyle-W_l\fluxoutp_l(x_l)<0
\end{align}
and \eqref{eq:111-2} holds. If $\Lout_{\head(l)}$ is nonempty, let $\Lout_{\head(l)}=\{m\}$ and observe that $W_k=(1-\Gamma_{\head(l)})W_m$ and $\beta_{km}^{\head(l)}=1-\Gamma_{\head(l)}$ for all $k\in\Lin_{\head(l)}$. Suppose link $m$ has adequate supply for upstream demand so that  $\flowout_k(x)=\fluxout_k(x_k)$ for all $k\in\Lin_{\head(l)}$ and $\flowin_m(x)=\sum_{k\in\Lin_{\head(l)}}\turn^{\head(l)}_{km}\fluxout_k(x_k)$.
Then the lefthand side of \eqref{eq:111-2} is
$-W_l\fluxoutp_l(x_l) +W_m\beta_{lm}^{\head(l)}\fluxoutp_l(x_l)=0$
and therefore \eqref{eq:111-2} holds. If link $m$ has inadequate supply, there exists $\epsilon>0$ such that
\begin{alignat}{2}
  \label{eq:102}
\textstyle    \sum_{k\in\Lin_{\head(l)}}\flowout_k(y)&=\left(1-\Gamma_{\head(l)}\right)^{-1}&&\fluxin_m(y_m)\quad\\
&&&\qquad \forall y\in\Ball_\epsilon(x)\\
\flowin_m(y)&=\fluxin_m(y_m) \hspace{.3in} &&\qquad \forall y\in\Ball_\epsilon(x).
\end{alignat}
Then
$-\sum_{k\in\Lin_{\head(l)}}W_k\flowout_k(y)+W_m \flowin_m(y)=0$ for all $ y\in\Ball_\epsilon(x)$
and \eqref{eq:111-2} follows.
\end{proof}

\label{appen:proofprop9}
\begin{proof}[Proof of Proposition \ref{prop:stab1}]
  A network satisfying the 1) and 2) of the proposition consists of only merging junctions and is necessarily a polytree, thus Proposition \ref{prop:unique2} ensures uniqueness of the equilibrium flow.

By Lemma \ref{lem:2} above, $WJ^{(i)}(\dens)$ is a compartmental matrix for all $i\in\mathcal{I}$ such that $F(\dens)=F^{(i)}(\dens)$ and all $\dens\in\Domain^\circ$. Applying Lemma \ref{lem:main} with $\Omega \triangleq \Domain$ completes the proof.

\end{proof}

\bibliographystyle{ieeetr}
\bibliography{$HOME/Documents/Books/books}

\begin{thebibliography}{10}

\bibitem{kurzhanskiy2010active}
A.~A. Kurzhanskiy and P.~Varaiya, ``Active traffic management on road networks:
  A macroscopic approach,'' {\em Philosophical Transactions of the Royal
  Society A: Mathematical, Physical and Engineering Sciences}, vol.~368,
  no.~1928, pp.~4607--4626, 2010.

\bibitem{Helbing:2001uq}
D.~Helbing, ``Traffic and related self-driven many-particle systems,'' {\em
  Reviews of modern physics}, vol.~73, no.~4, p.~1067, 2001.

\bibitem{Hoogendoorn:2001uq}
S.~P. Hoogendoorn and P.~H. Bovy, ``State-of-the-art of vehicular traffic flow
  modelling,'' {\em Proceedings of the Institution of Mechanical Engineers,
  Part I: Journal of Systems and Control Engineering}, vol.~215, no.~4,
  pp.~283--303, 2001.

\bibitem{Papageorgiou:1990ij}
M.~Papageorgiou, J.-M. Blosseville, and H.~Hadj-Salem, ``Modelling and
  real-time control of traffic flow on the southern part of {B}oulevard
  {P}eripherique in {P}aris: Part {I}: Modelling,'' {\em Transportation
  Research Part A: General}, vol.~24, no.~5, pp.~345--359, 1990.

\bibitem{Messner:1990bs}
A.~Messner and M.~Papageorgiou, ``{METANET}: A macroscopic simulation program
  for motorway networks,'' {\em Traffic Engineering \& Control}, vol.~31,
  no.~8-9, pp.~466--470, 1990.

\bibitem{Lebacque:1996zr}
J.~Lebacque, ``The {G}odunov scheme and what it means for first order traffic
  flow models,'' in {\em International symposium on transportation and traffic
  theory}, pp.~647--677, 1996.

\bibitem{Daganzo:1994fk}
C.~F. Daganzo, ``The cell transmission model: A dynamic representation of
  highway traffic consistent with the hydrodynamic theory,'' {\em
  Transportation Research Part B: Methodological}, vol.~28, no.~4,
  pp.~269--287, 1994.

\bibitem{Daganzo:1995kx}
C.~F. Daganzo, ``The cell transmission model, part {II}: Network traffic,''
  {\em Transportation Research Part B: Methodological}, vol.~29, no.~2,
  pp.~79--93, 1995.

\bibitem{Gomes:2008fk}
G.~Gomes, R.~Horowitz, A.~A. Kurzhanskiy, P.~Varaiya, and J.~Kwon, ``Behavior
  of the cell transmission model and effectiveness of ramp metering,'' {\em
  Transportation Research Part C: Emerging Technologies}, vol.~16, no.~4,
  pp.~485--513, 2008.

\bibitem{Como:2013ve}
G.~Como, K.~Savla, D.~Acemoglu, M.~Dahleh, and E.~Frazzoli, ``Robust
  distributed routing in dynamical networks--part {I}: Locally responsive
  policies and weak resilience,'' {\em IEEE Transactions on Automatic Control},
  vol.~58, no.~2, pp.~317--332, 2013.

\bibitem{Como:2013bh}
G.~Como, K.~Savla, D.~Acemoglu, M.~Dahleh, and E.~Frazzoli, ``Robust
  distributed routing in dynamical networks--part {II}: Strong resilience,
  equilibrium selection and cascaded failures,'' {\em IEEE Transactions on
  Automatic Control}, vol.~58, no.~2, pp.~333--348, 2013.

\bibitem{Como:2013qf}
G.~Como, E.~Lovisari, and K.~Savla, ``Throughput optimality and overload
  behavior of dynamical flow networks under monotone distributed routing,''
  2013.
\newblock arXiv:1308.1993.

\bibitem{Hirsch:1985fk}
M.~W. Hirsch, ``Systems of differential equations that are competitive or
  cooperative {II}: Convergence almost everywhere,'' {\em SIAM Journal on
  Mathematical Analysis}, vol.~16, no.~3, pp.~423--439, 1985.

\bibitem{Angeli:2003fv}
D.~Angeli and E.~Sontag, ``Monotone control systems,'' {\em IEEE Transactions
  on Automatic Control}, vol.~48, no.~10, pp.~1684--1698, 2003.

\bibitem{Lighthill:1955vn}
M.~J. Lighthill and G.~B. Whitham, ``On kinematic waves {II}. a theory of
  traffic flow on long crowded roads,'' {\em Proceedings of the Royal Society
  of London. Series A. Mathematical and Physical Sciences}, vol.~229, no.~1178,
  pp.~317--345, 1955.

\bibitem{Richards:1956ys}
P.~I. Richards, ``Shock waves on the highway,'' {\em Operations research},
  vol.~4, no.~1, pp.~42--51, 1956.

\bibitem{Garavello:2006tg}
M.~Garavello and B.~Piccoli, {\em Traffic flow on networks}.
\newblock American Institute of Mathematical Sciences, 2006.

\bibitem{Canepa:2012fk}
E.~Canepa and C.~Claudel, ``Exact solutions to traffic density estimation
  problems involving the lighthill-whitham-richards traffic flow model using
  mixed integer programming,'' in {\em 2012 15th International IEEE Conference
  on Intelligent Transportation Systems (ITSC)}, pp.~832--839, Sept 2012.

\bibitem{Han:2012ve}
K.~Han, B.~Piccoli, T.~L. Friesz, and T.~Yao, ``A continuous-time link-based
  kinematic wave model for dynamic traffic networks,'' 2012.
\newblock arXiv:1208.5141v.

\bibitem{Sandberg:1978fk}
I.~W. Sandberg, ``On the mathematical foundations of compartmental analysis in
  biology, medicine, and ecology,'' {\em IEEE Transactions on Circuits and
  Systems}, vol.~25, no.~5, pp.~273--279, 1978.

\bibitem{Maeda:1978fk}
H.~Maeda, S.~Kodama, and Y.~Ohta, ``Asymptotic behavior of nonlinear
  compartmental systems: nonoscillation and stability,'' {\em IEEE Transactions
  on Circuits and Systems}, vol.~25, no.~6, pp.~372--378, 1978.

\bibitem{Jacquez:1993uq}
J.~A. Jacquez and C.~P. Simon, ``Qualitative theory of compartmental systems,''
  {\em SIAM Review}, vol.~35, no.~1, pp.~43--79, 1993.

\bibitem{Coogan:2014ph}
S.~Coogan and M.~Arcak, ``Dynamical properties of a compartmental model for
  traffic networks,'' in {\em Proceedings of the 2014 American Control
  Conference}, pp.~2511--2516, 2014.

\bibitem{Lebacque:2005fk}
J.~Lebacque, ``Intersection modeling, application to macroscopic network
  traffic flow models and traffic management,'' in {\em Traffic and Granular
  Flow'03}, pp.~261--278, Springer, 2005.

\bibitem{Lebacque:2004vn}
J.~Lebacque and M.~Khoshyaran, ``First order macroscopic traffic flow models
  for networks in the context of dynamic assignment,'' in {\em Transportation
  Planning}, pp.~119--140, Springer, 2004.

\bibitem{Hale:1980fk}
J.~Hale, {\em Ordinary Differential Equations}.
\newblock Krieger, 1980.

\bibitem{Scholtes:2012fk}
S.~Scholtes, {\em Introduction to piecewise differentiable equations}.
\newblock Springer, 2012.

\bibitem{khalil}
H.~K. Khalil, {\em Nonlinear Systems}.
\newblock Prentice Hall, third~ed., 2002.

\bibitem{Papageorgiou:2000ys}
M.~Papageorgiou and A.~Kotsialos, ``Freeway ramp metering: An overview,'' in
  {\em Intelligent Transportation Systems, 2000. Proceedings. 2000 IEEE},
  pp.~228--239, IEEE, 2000.

\bibitem{Kurzhanskiy:2011fk}
A.~A. Kurzhanskiy and P.~Varaiya, ``Guaranteed prediction and estimation of the
  state of a road network,'' {\em Transportation Research Part C: Emerging
  Technologies}, vol.~21, no.~1, pp.~163 -- 180, 2012.

\bibitem{Rudin:1976qf}
W.~Rudin, {\em Principles of mathematical analysis}.
\newblock McGraw-Hill, Inc., 1976.

\bibitem{blanchini2008set}
F.~Blanchini and S.~Miani, {\em Set-theoretic methods in control}.
\newblock Springer, 2008.

\bibitem{berman}
A.~Berman and R.~J. Plemmons, {\em Nonnegative Matrices in the Mathematical
  Sciences}.
\newblock Society for Industrial and Applied Mathematics, 1994.

\bibitem{Desoer:1972uq}
C.~Desoer and H.~Haneda, ``The measure of a matrix as a tool to analyze
  computer algorithms for circuit analysis,'' {\em IEEE Transactions on Circuit
  Theory}, vol.~19, no.~5, pp.~480--486, 1972.

\bibitem{Sontag:2010fk}
E.~D. Sontag, ``Contractive systems with inputs,'' in {\em Perspectives in
  Mathematical System Theory, Control, and Signal Processing}, pp.~217--228,
  Springer, 2010.

\end{thebibliography}

\end{document}